\documentclass[%
11pt,
reprint,
onecolumn,
tightenlines,
superscriptaddress,
preprintnumbers,
nofootinbib,
amsmath,amssymb,amsthm,
physrev,
eqsecnum,tikz,
]{revtex4-2}

\usepackage{isomath}
\usepackage{amsmath,amsthm}
\usepackage{amsbsy}
\usepackage{amssymb}
\usepackage{amscd}
\usepackage{amsfonts}
\usepackage{stmaryrd}
\usepackage{siunitx}
\usepackage{euscript}
\usepackage[utf8]{inputenc}
\usepackage[T1]{fontenc}
\usepackage{newtxtext} 
\everymath{\displaystyle}
\usepackage{exscale}
\usepackage{microtype}
\usepackage{booktabs}
\usepackage{algorithm}
\usepackage{algpseudocode}
\usepackage{hyperref}

\usepackage{graphicx}
\usepackage{boxedminipage}
\usepackage{calc}
\usepackage[dvipsnames]{xcolor}
\graphicspath{ {media/} }
\usepackage[caption=false,justification=raggedright]{subfig}

\usepackage{setspace}
\usepackage{enumitem}
\setitemize{noitemsep,topsep=0pt,parsep=0pt,partopsep=0pt}
\setenumerate{noitemsep,topsep=0pt,parsep=0pt,partopsep=0pt}
\setdescription{noitemsep,topsep=0pt,parsep=0pt,partopsep=0pt}

\usepackage[colorinlistoftodos, color=green!40]{todonotes}
\setuptodonotes{inline,caption={}}

\usepackage{soul} 
\usepackage[normalem]{ulem}

\usepackage{orcidlink}
\usepackage{siunitx}
\usepackage[small]{titlesec}

\titlespacing*{\section}{0pt}{12pt plus 4pt minus 2pt}{2pt plus 2pt minus 2pt}
\titlespacing*{\subsection}{0pt}{12pt plus 4pt minus 2pt}{2pt plus 2pt minus 2pt}
\titlespacing*{\subsubsection}{0pt}{12pt plus 4pt minus 2pt}{2pt plus 2pt minus 2pt}
\titlespacing*{\paragraph}{0pt}{12pt plus 4pt minus 2pt}{2pt plus 2pt minus 2pt}

\makeatletter

    \renewcommand*{\p@subsection}{}
    
    \renewcommand*{\p@subsubsection}{}
\makeatother

\usepackage{upgreek}

\newtheorem{theorem}{Theorem}[section]
\newtheorem{lemma}{Lemma}[section]

\theoremstyle{definition}

\AtEndEnvironment{definition}{\null\hfill\qedsymbol}
\newtheorem{remark}{Remark}[section]
\AtEndEnvironment{remark}{\null\hfill\qedsymbol}

\AtEndEnvironment{example}{\null\hfill\qedsymbol}

\AtEndEnvironment{assumption}{\null\hfill\qedsymbol}

\DeclareMathOperator{\divergence}{div}

\newcommand{\dm}{\ \mathrm{d}}

\usepackage{times}
\usepackage{amsfonts}

\usepackage{amsopn}

\usepackage{amsmath}
\usepackage{bbm}
\usepackage{stmaryrd}
\usepackage[rightcaption]{sidecap}
\usepackage{accents}
\usepackage{footnote}

\newcommand{\Scr}[1]{\mathbb{#1}}
\newcommand{\scr}[1]{\mathcal{#1}}

\newcommand{\ve}[1]{\boldsymbol{#1}}

\newcommand{\utilde}[1]{\underaccent{\tilde}{#1}}

\newtheorem{exmp}{Example}[section]
\newcommand{\beol}[1]{\begin{equation}
#1
\end{equation}}
\newcommand{\beolnn}[1]{\begin{equation*}
#1
\end{equation*}}
\newcommand{\bml}[1]{\begin{equation}
\begin{split}
#1
\end{split}
\end{equation}}
\newcommand{\bmlnn}[1]{\begin{equation*}
\begin{split}
#1
\end{split}
\end{equation*}}

\def\lr{\mbox{\begin{picture}(7,10)
\put(1,0){\line(1,0){5}}
\put(1,0){\line(1,2){5}}
\put(6,0){\line(0,1){10}}
\end{picture}
}}

\newcommand{\rmrk}[1]{\begin{remark}
#1
\end{remark}}

\begin{document}
	
	\preprint{SIAM Multiscale Modeling and Simulation 24(1):162-186, 2026 (DOI: 10.1137/25M1747907)}
	
	\title{Two-Scale Analysis of the Electrostatics of Dielectric Crystals: Emergence of Polarization Density and Boundary Charges}
	
	\author{Shoham Sen}
	\affiliation{Department of Civil and Environmental Engineering, Carnegie Mellon University}
	
	\author{Yang Wang}
	\affiliation{Pittsburgh Supercomputing Center}
	
	\author{Timothy Breitzman}
	\affiliation{Air Force Research Laboratory}
	
	\author{Kaushik Dayal}
	\email{Kaushik.Dayal@cmu.edu}
	\affiliation{Department of Civil and Environmental Engineering, Carnegie Mellon University}
	\affiliation{Center for Nonlinear Analysis, Department of Mathematical Sciences, Carnegie Mellon University}
	\affiliation{Department of Mechanical Engineering, Carnegie Mellon University}
	
	\date{\today}

	
	\begin{abstract}
		Ionic crystals, such as solid electrolytes and complex oxides, are central to modern technologies for energy storage, sensing, actuation, and other functional applications. An important fundamental issue in the atomic and quantum-scale modeling of these materials is defining the macroscopic polarization. In a periodic crystal, the usual definition of the polarization as the first moment of the charge density in a unit cell is found to depend qualitatively – allowing even a change in the sign! – and quantitatively on the choice of unit cell.
		
		We examine this issue using a rigorous approach based on the framework of 2-scale convergence. By examining the continuum limit of when the lattice spacing is much smaller than the characteristic dimensions of the body, we show that the 2-scale  limit provides both a bulk polarization as well as a surface charge density supported on the boundary of the body.    
		Further, different choices of the periodic unit cell of the body lead to correspondingly different partial unit cells at the boundary; these choices give to different bulk polarization and surface charges but compensate such that the electric field and energy are independent of the choice of unit cell.
	\end{abstract}
	
	\maketitle


\section{Introduction}

Crystalline ionic materials, such as dielectrics, solid electrolytes, and ferroelectrics, are essential for engineering applications such as energy storage, sensing, actuation, and numerous functional settings.
Continuum models at scales of engineering relevance typically use the continuum dipole per unit volume --- the polarization --- as a multiscale mediator that captures key aspects of the atomic-scale charge distribution
\cite{james1994internal,xiao2005influence,tadmor-EffHamil,cicalese2009discrete,bach2020discrete,alicandro2008continuum,sharp1994electrostatic,grasinger2020statistical,grasinger2021architected,grasinger2022statistical,grasinger2021nonlinear,khandagale2024statistical,khandagale2025nonlocal}. The polarization field consequently is a central quantity of interest in various successful macroscopic continuum models of dielectric and functional materials, e.g., piezoelectricity \cite{haeni2004room,tagantsev1986piezoelectricity,tagantsev2010domains,rahmati2019nonlinear,deng2017continuum}, flexoelectricity \cite{zubko2013flexoelectric,abdollahi2015revisiting,abdollahi2014computational,abdollahi2019converse,abdollahi2015fracture,liu2013flexoelectricity,krichen2016flexoelectricity,grasinger2021flexoelectricity,liu2013flexoelectricity,ahmadpoor2013apparent,steigmann2018mechanics,torbati2022coupling, liu2018emergent}, ferroelectricity and electromechanics \cite{Irene1,yang2011completely,yang2011effect,yang2012free,yang2012influence,yang2012microstructure,yen2008study,shu2008constrained,katusele2025exploiting,katusele2025soft}, and analogous models in micromagnetism \cite{james1990frustration,james1998magnetostriction,james2000martensitic,desimone2002constrained,tickle1999ferromagnetic,benesova2018existence,wang2001gauss,garcia2003accurate,bosse2007microdomain}.

An important feature of the polarization is that the continuum value depends on the choice of lattice unit cell. In typical formal calculations performed in \cite{sen2024nonuniqueness, Resta-Vanderbilt, marshall2014atomistic, spaldin2012beginner}, it can be shown that with the appropriate change in the choice of unit cells, the sign of the polarization can be made negative. For example, consider an illustrative one-dimensional case of an infinite line of alternating positive and negative charges. The dipole moment can be defined as a vector connecting a negative charge to its nearest positive neighbor. However, since each negative charge has two equidistant positive neighbors, the choice of reference introduces a sign ambiguity in the dipole moment. Given that polarization serves as a macroscopic descriptor of the electronic properties of a material, its non-uniqueness introduces challenges for the reliability of predictions made using continuum models.

To address this challenge, we apply concepts of two-scale convergence as introduced by \cite{allaire1992homogenization}. The advantage of two-scale convergence lies in its ability to capture specific physical phenomena at the microscopic scale. Unlike weak convergence, which yields an `averaged' function in the limit, two-scale convergence preserves information about the function's local oscillations. Using this framework, we derive two equations: one describing the physics at the macroscopic scale and another for the microscopic scale. We establish a polarization theorem for the dipole from the charge density, which enables us to compute both the homogenized (macroscopic-scale) potential and the corrector (microscopic-scale) potential. Utilizing the property of weak-convergence \cite{murat1997h,allaire1997shape,gustafsson2006note}, we demonstrate the uniqueness of the homogenized limit for the potential.

Mathematical approaches to related problems typically begin with a finite body and then take the limit; such approaches can naturally account for surface effects. For instance, \cite{james1994internal, jha2023discrete, muller2002discrete, schlomerkemper-schmidt,jha2023discrete} take this approach; however, they begin with microscopic dipoles rather than a charge distribution and hence do not have surface effects. Related work in \cite{rosakis2014continuum} with short-range atomic forces shows the emergence of a surface energy. In \cite{jha2023atomic,marshall2014atomistic}, they formally study such limits using distributions of charges. Their findings demonstrate that the coarse-grained description necessitates both a polarization—defined as the dipole moment of the charge within a given periodic unit cell—and the associated surface charge density corresponding to the respective partial unit cell.

A Modern Theory of Polarization \cite{resta2007theory} introduces an alternative definition of polarization through the change in the Berry phase of the wavefunction during an adiabatic process. The central assertion is that, while polarization itself is non-unique, the change in polarization is uniquely defined modulo a polarization quantum. The assumption of adiabaticity implies that eigenstates remain the same throughout the ``change", allowing one to get an expression for the current using the eigenstates at equilibrium. Thus, by reformulating the multiscale mediator from the polarization to the change in polarization, they are able to establish unique values (modulo the polarization quantum). However, the theory relies fundamentally on the independent-electron approximation and the validity of adiabatic evolution. As such, it is not applicable to systems exhibiting strong electronic correlations or undergoing displacive first-order phase transitions, where changes in polarization may occur abruptly and cannot be captured through an adiabatic process  \cite{sen2024nonuniqueness,Sen2022}.

\subsection{Contributions of this Work}

This paper focuses on deriving the limiting form of the electrostatic potential for a solid. By assuming that the material is locally periodic, we extract two-scale equations to describe the behavior. Specifically, we provide a rigorous derivation of the homogenized limit for the electrostatic interaction between nuclei and electrons within a charge distribution defined over a bounded domain $\Omega \subset \mathbb{R}^d$, with particular interest in the case $d=3$. Due to the long-range nature of electrostatic interactions, the sum of the charge density over lattice unit cells generally does not converge \cite{marshall2014atomistic}. As discussed in \cite{marshall2014atomistic}, a net charge neutrality condition in each unit cell is necessary to obtain a well-defined electrostatic potential in the limit. For further details, we refer the reader to Section 3.1 of \cite{marshall2014atomistic}. This work aims to rigorously establish the results presented in \cite{marshall2014atomistic}

Classical homogenization problems involve the homogenization of linear operators with periodically oscillating coefficients and fixed source terms. For the sequence of problems in \eqref{OperatorHomG}, the homogenization process replaces this sequence with a single operator, $L_0$, such that the solutions corresponding to the original operators converge weakly to the solution associated with $L_0$.
\begin{equation}
    L_{\varepsilon}u_{\varepsilon}=f \overset{\varepsilon\downarrow0}{\longrightarrow} L_{0}u_{0}=f, \label{OperatorHomG}
\end{equation}
where the domain of definition of the partial differential equation (PDE) is $\Omega$, and the above PDE's are complemented with proper boundary conditions. In the problem under consideration, the operator is fixed, whereas the forcing term varies.
\begin{equation*}
    Lu_{\varepsilon}=f_{\varepsilon}. 
\end{equation*}

We derive the limiting form of the source term $f_{\varepsilon}$ and obtain a homogenized PDE in the limit. Since the operator $L$ is fixed, the solution can be expressed as the integral of $f_{\varepsilon}$ against the system's Green's function. Given that the Green's function remains unchanged, weak convergence can be employed to determine the limit. This approach was used formally in \cite{marshall2014atomistic}, where the authors work with the free-space Green's function rather than a general Green's function. In this paper, we apply 2-scale analysis to rigorously consider the homogenized limit of the aforementioned PDE over general domains. Additionally, the dipole limit of the electron density provides not only a definition of the polarization but also the surface charge density in the limit. 

The primary result of this paper, presented in Section \ref{WF}, establishes that a sequence of Poisson equations defined for a sequence of charge distributions $\{\tilde{\rho}_l\}:\Omega \to \mathbb{R}$, associated with a scaled lattice $l\scr{L}$ (where $l$ scales the lattice $\scr{L}$), possesses a well-defined limit. By imposing the condition that the charge distribution has zero mean within each unit cell of $l\scr{L}$, the sequence of partial differential equations can be formulated as:
\begin{equation}
\begin{split}
    -\Delta&\Phi_l=\dfrac{\tilde{\rho}_l}{l}\mathbbm{1}_{\Omega},\\
	 &\Phi_l=f~\text{on}~\Gamma_d,\\
	 &\Phi_l(\ve{x})=0 \text{ as }|\ve{x}|\to\infty,\\
     &\Phi_l\in L^2_{loc}(\Scr{R}^3)~~\text{and}~~\nabla\Phi_l\in L^2(\Scr{R}^3),
\end{split}\label{1.2}
\end{equation}
where $\Omega\subset\Scr{R}^3$ is the support of the charge distribution and the potential is specified on $\Gamma_d$ part of the boundary $\partial\Omega$. The last line in \eqref{1.2} follows \cite{tian2012dielectric}, and ensures that the total energy is finite.
The homogenized limit of the above PDE is:
\begin{equation}
\begin{split}
    \Delta_x&\Phi_0(\ve{x})=\mathbbm{1}_{\Omega}\divergence(\ve{p}_0)~~\text{in}~\Scr{R}^3,\\
	 &\Phi_0=f~\text{on}~\Gamma_d,\\
	 &\left\llbracket\dfrac{\partial\Phi_0}{\partial n}\right\rrbracket=\ve{p}_0\cdot \ve{n}+\sigma~\text{on}~\partial\Omega\backslash\Gamma_d,\\
     &\nabla\Phi_l~\text{is bounded uniformly in $L^2(\Omega)$},
\end{split}\label{1.3}
\end{equation}
where $\sigma$ is a surface charge due to partial unit cells and $\ve{p}_0$ is the polarization distribution due to the charge $\{\tilde{\rho}_l\}$. The last condition corresponds to the energy of the sequence being uniformly bounded in $l$.

\subsection{Organization}

The structure of the paper is as follows. In Section \ref{Intuition}, we provide an intuitive overview of the problem, considering a simplified version that parallels the analysis in \cite{marshall2014atomistic}. Readers seeking a more visual explanation of the homogenization process may refer to the corresponding section in \cite{sen2024nonuniqueness}. Section \ref{Notation} introduces the notation that will be utilized throughout the paper to ensure clarity and consistency. In Section \ref{PF}, we formulate the electrostatic equation along with the appropriate boundary conditions. Section \ref{WF} presents the weak two-scale formulation used to derive the homogenized equation in the limit as $l \to 0$. This section yields two independent equations: one describing the homogenized potential $\Phi_0(\ve{x})$ and the other defining the corrector field $\Phi_1(\ve{x},\ve{y})$, both expressed in terms of the charge density $\Tilde{\rho}_l(\ve{x})$. Finally, the proof of the polarization theorem is reserved for Section \ref{2sTm}.

In Section \ref{2sTm}, we establish the weak convergence of zero-mean functions to a bulk dipole term and a surface charge density term. The integral over the domain $\Omega$ is decomposed into a sum over unit cells $l\Box$ (collectively denoted as $\Omega(\Box)$) and a residual term arising from partial unit cells (collectively denoted as $\Omega(\lr)$). The treatment of these components is divided between Section \ref{VT} and Section \ref{SurfCrgDnsty}. In both sections, we rely on the boundedness of the zero-mean charge density within each unit cell and the continuity and differentiability of the test functions. In Section \ref{VT}, we express the bulk term as a Riemann summation over the domain. By leveraging the boundedness of the integrand, we extract a convergent subsequence. In Section \ref{SurfCrgDnsty}, we construct the integral for the surface charge density, showing that it converges to a well-defined surface-integral limit for the residual term. 

In Section \ref{Non-uniqueness}, we address the non-uniqueness of polarization arising from different choices of unit cells, noting that each choice corresponds to a distinct surface charge density in the limit. Despite this inherent non-uniqueness of polarization and surface charge density, we demonstrate that the homogenized potential remains uniquely defined.

\subsection{Heuristics}\label{Intuition}
The aim of this section is to provide motivation for the rigorous calculation presented later by considering a simplified version of problem \eqref{1.2}. Specifically, we set $|\Gamma_d|=0$, allowing the potential to be expressed in terms of the free-space Green's function and the charge distribution (see \cite{marshall2014atomistic}). Additionally, we eliminate the explicit dependence on $l$ in the sequence of charge distributions in \eqref{1.2}, under the assumption that $l$ is sufficiently small. In this simplified setting, the potential can be represented as:
\begin{equation}
 \Phi_l(\ve{x})=\int_{\Omega} G(\ve{x},\ve{x}')\dfrac{\Tilde{\rho}(\ve{x}')}{l} \dm\ve{x}',\label{formal-Int}   
\end{equation}
where $G(\ve{x},\ve{x}')$ is the free space Greens function. The domain $\Omega$ is partitioned into full ($\Box$) and partial unit cells ($\lr$); the former consists of charge-neutral unit cells, while the latter is not charge-neutral. The unit cells are chosen to be unit cubes of the form $[0,1)^3$, with the associated corner mapping ($\hat{\cdot}$) defined as $\ve{x} = x^i\ve{e}_i \mapsto [x^i]\ve{e}_i$, where $[\cdot]$ denotes the floor function. Note that the floor function acts on the coordinates.

Thus the integration can be represented by:
\begin{equation}
    \Phi_l(\ve{x})=\sum_{\Omega(\Box)}\int_{l\Box} G(\ve{x},\ve{x}')\dfrac{\Tilde{\rho}(\ve{x}')}{l} \dm\ve{x}'+\sum_{\Omega(\lr)}\int_{l\lr} G(\ve{x},\ve{x}')\dfrac{\Tilde{\rho}(\ve{x}')}{l} \dm\ve{x}',
\end{equation}
where $\Omega(\Box)$ is the collection of full unit cells in the bulk and $\Omega(\lr)$ is the collection of partial unit cells near the surface.

Next, we substitute $\ve{x}' = \hat{\ve{x}}' + l\ve{y}'$, where $\hat{\ve{x}}$ corresponds to the corner map defined previously. Applying a Taylor expansion to the Green’s function yields:
\begin{equation}
    \begin{split}
\Phi_l(\ve{x})&=\sum_{\hat{\ve{x}}'\in\Omega(\Box)}l^3\int_{\Box} G(\ve{x},\hat{\ve{x}}'+l\ve{y}')\dfrac{\Tilde{\rho}(\hat{\ve{x}}'+l\ve{y}')}{l} \dm\ve{y}'+\sum_{\hat{\ve{x}}'\in\Omega(\lr)}l^3\int_{\lr} G(\ve{x},\hat{\ve{x}}'+l\ve{y}')\dfrac{\Tilde{\rho}(\hat{\ve{x}}'+l\ve{y}')}{l} \dm\ve{y}'\\
    &\approx\sum_{\hat{\ve{x}}'\in\Omega(\Box)}l^3\int_{\Box} \left[G(\ve{x},\hat{\ve{x}}')+l\ve{y}'\cdot\nabla_{\ve{x}'}G(\ve{x},\ve{x}')\right]\dfrac{\Tilde{\rho}(\hat{\ve{x}}'+l\ve{y}')}{l}\dm\ve{y}'+\sum_{\hat{\ve{x}}'\in\Omega(\lr)}l^2\int_{\lr} G(\ve{x},\hat{\ve{x}}')\Tilde{\rho}(\hat{\ve{x}}'+l\ve{y}') \dm\ve{y}',\\
    &=\sum_{\hat{\ve{x}}'\in\Omega(\Box)}l^3 \nabla_{\ve{x}'}G(\ve{x},\ve{x}')\cdot\int_{\Box} \ve{y}'\Tilde{\rho}(\hat{\ve{x}}'+l\ve{y}')\dm\ve{y}'+\sum_{\hat{\ve{x}}'\in\Omega(\lr)}l^2G(\ve{x},\hat{\ve{x}}')\int_{\lr} \Tilde{\rho}(\hat{\ve{x}}'+l\ve{y}') \dm\ve{y}',\\
    &=\sum_{\hat{\ve{x}}'\in\Omega(\Box)}l^3 \nabla_{\ve{x}'}G(\ve{x},\ve{x}')\cdot\ve{p}(\hat{\ve{x}}')+\sum_{\hat{\ve{x}}'\in\Omega(\lr)}l^2G(\ve{x},\hat{\ve{x}}')\sigma(\hat{\ve{x}}'),
    \end{split}
\end{equation}
where to get the penultimate line, we used the charge neutrality of the unit cells.

Now for $l$ small enough, we can replace the sum with an integration. This gives us,
\begin{equation}
    \begin{split}
        \Phi_l(\ve{x})&=\int_{\Omega} \nabla_{\ve{x}'}G(\ve{x},\ve{x}')\cdot\ve{p}({\ve{x}}') \dm\ve{x}'+\int_{\partial\Omega}G(\ve{x},\ve{x}')\sigma(\ve{x}') \dm S_{{\ve{x}'}}\\
                 &=\int_{\Omega} G(\ve{x},\ve{x}')\big(-\divergence_{\ve{x}'}\ve{p}+\mathbbm{1}_{\partial\Omega}(\ve{p}\cdot \ve{n}+\sigma)\big) \dm\ve{x}'.
    \end{split}
\end{equation}
Comparing the above with \eqref{formal-Int} gives us the corresponding result for \eqref{1.3}.

\section{Notation}\label{Notation}
In this section, we introduce the notations that will be utilized throughout the paper. We begin by providing key definitions,
\begin{enumerate}
\item[] $\Scr{Z}$ : set of integers.
\item[] $\Scr{R}$ : set of reals.
\item[] $\{\}$ : the null set.
\item[] $B(R,f)$: Denotes a ball of radius $R$ centered at $f$.
\item[] $l\Box$ : a representative scaled unit cell of dimensions $l\mathsf{x}l\mathsf{x}l$.
\item[] $\Box\equiv l\Box\big\vert_{l=1} :$ represents a unit cell.
\item[] $\Omega(\Box)$ : denotes the set of all cells inside $\Omega$.
\item[] $l\lr$ : a representative scaled partial unit cell. A scaled partial unit cell is defined as a scaled unit cell not entirely contained in the domain.
\item[] $\lr\equiv l\lr\big\vert_{l=1} :$ this represents a partial unit cell. A partial unit cell is defined as a unit cell not entirely contained in the domain.
\item[] $\Omega(\lr)$: denotes the set of all partial unit cells in $\Omega$.
\item[] $\scr{I}$ : this is the indexing set which contains the sequence of $l$'s going to 0.
\item[] $|\cdot|,~||\cdot||$ : has been used in multiple ways to denote different concepts with the context clarifying the meaning.
\begin{enumerate}
    \item to denote distance.
    \item cardinality of a set.
    \item Lebesgue measure of a set.
\end{enumerate}
\item[] $\scr{L}$ : the lattice constructed from lattice vectors $\{\ve{e}_i\}_1^3$. Vectors are denoted using bold letters.
\item[] $l\scr{L}$ : the lattice constructed from lattice vectors $l\ve{e}_i$.
\item[] $\text{PDE}$ : is shorthand for partial differential equation.
\item[] $L^p$ : space of measurable functions, the integral of whose p$^{th}$ power of its absolute value is finite.
\begin{enumerate}
    \item[] For $p=\infty$, the essential supremum is required to be bounded.
\end{enumerate}
\item[] $\scr{C}^k$ : space of functions with continuous first $k$ derivatives.
\item[] $W^{k,p}$ : space of functions denoting functions and their first $k$ derivatives being in $L^p$. 
\begin{enumerate}
    \item[] For $p=2$, the space is represented by the short hand $H^k$.
\end{enumerate}
\item[] $D$ : space of distributions.
\item[] $\scr{C}_{\#}^{\infty}$ : space of periodic functions which are infinitely differentiable.
\item[] $f\vert_U$ : This is the value of the function restricted to the set $U$.
\item[] $f(x,\cdot)$ : shorthand for a function $f(x,y)$, where the first argument ($x$) is fixed, and the second argument is allowed to vary.
\item[] $\rho_l(\ve{x})$ : sequence of charge densities.
\item[] ${\rho}_l(\hat{\ve{x}}_l,\ve{z})$ : sequence of two scale identified charge distribution. The two scales are identified with the corner map.
\item[] $\tilde{\rho}_l(\hat{\ve{x}}_l,\ve{y})$ : sequence of dipole scaled and two scale identified charge distribution. The two scales are identified with the corner map. The fast variable is appropriately scaled.
\item[] $\sigma_l$ : surface charge distribution due to partial unit cells $\forall~l\in\scr{I}$. Defined in \eqref{(5.17)} and \eqref{weaksurf}.
\item[] $\sigma$ : limit of surface charge distribution due to partial unit cells. Limit of \eqref{(5.17)} and \eqref{weaksurf}, first occurring in \eqref{2ScaleThm}.
\end{enumerate}
\vspace{1em}

We define the Minkowski sum between two sets $P$ and $Q$ as 
\begin{equation}
P\oplus Q:=\{p+q\vert p\in P,~q\in Q\}
\end{equation}
where for our purposes, we require that $P,Q\subset\Scr{R}^d$ and we use $x$ to denote the singleton set $\{x\}$, provided no confusion arises.

We define a lattice $\scr{L}$ for this problem,
\begin{equation}
\scr{L}:=\big\{\ve{x}\in\Scr{R}^3\big\vert \ve{x}=\sum_{i}x^i\ve{e}_i,~x^i\in\Scr{Z},~i=1,2,3\big\}
\end{equation}
where $\{\ve{e}_i\}_1^3$ are linearly independent lattice vectors describing the lattice. The corresponding dual lattice ($\scr{RL}$) and dual vectors $\{\ve{e}^j\}_1^3$ are defined by the orthogonality relation $\ve{e}_i\cdot\ve{e}^j=\delta^j_{i}$. We scale $\scr{L}$ by a factor $l\in(0,1]$ and obtain a lattice $l\scr{L}$ with corresponding lattice vectors $\{\ve{e}_{l,i}\}_1^3$ defined as
\beol{\ve{e}_{l,i}=l\ve{e}_i,~~i=1,2,3.}

To define a unit cell, it is also necessary to specify a corresponding corner map. We select ${\Box}^{\prime} \subset \mathbb{R}^3$ bounded and $\ve{f} \in \mathbb{R}^3$ such that the following conditions are satisfied:
\begin{enumerate}
    \item[i.] The unit cells covers $\Scr{R}^3$,
    \beolnn{\Scr{R}^3=\bigcup_{\ve{i}\in l\scr{L}}\ve{i}\oplus(l\ve{f}\oplus l{\Box}^{\prime}),}
    \item[ii.] The unit cells have no overlap
    \beolnn{\forall\ve{i}\neq\ve{j}\in l\scr{L},~~~~\ve{i}\oplus(l\ve{f}\oplus l{\Box}^{\prime})\cap\ve{j}\oplus(l\ve{f}\oplus l{\Box}^{\prime})=\{\}}
    \item[iii.] The unit cells are bounded, ie, $\exists~R>0$ such that 
    \beolnn{\ve{f}\oplus\Box'\subset B(R,\ve{f}),}
    where $B(R,\ve{f})$ is a ball of radius $R$ centered at $\ve{f}$.
\end{enumerate}
Observe that the first and second conditions are standard requirements for defining a unit cell. The third condition is necessary to ensure that different choices of unit cells yield a consistent notion of packing dimension or box-counting dimension, also referred to as the entropy index, which is intrinsically related to the Hausdorff dimension.

\rmrk{The boundedness of the unit cell helps prevent choices of unit cells with disjoint subsets that are infinitely far apart. Moreover, by (i) and (ii), we know that the unit ball contains a countable number of these unit cells and, equivalently, must be contained by a countable number of unit cells. If the unit cells were not measurable, then the ball would not be measurable either. This property is essential for defining measurable functions and computing the polarization on the unit cell.}

The scaled unit cell is defined as $l\Box:=l\ve{f}\oplus l{\Box}^{\prime}$ and the corner map is defined as
\beol{\hat{\ve{x}}_l:=\ve{i}+l\ve{f}~\text{if}~\ve{x}\in\Omega~\&~\exists\ve{i}\in l\scr{L}~\text{such that}~\ve{x}\in\ve{i}\oplus l\Box,}
where the subscript $l$ in the corner map is used to denote the scaling of the lattice. If no confusion arises, we will drop the subscript. We define the simple unit cell as $\Box:=[0,1)^3$ with $\ve{f}=0$.

A scaled partial cell of $l\scr{L}$, denoted $l\lr$, is defined as a scaled unit cell that is not entirely contained in the domain
\begin{equation}
{l\lr:=\big\{\ve{z}\in l\Box\big\vert\exists \ve{i}\in l\scr{L}~\text{such that }\ve{i}+ \ve{z}\in\Omega~\text{but}~\ve{i}\oplus l\Box\cap\Omega^c\neq\{\}\big\}.}
\end{equation}
The definition of a unit-cell and partial unit cell can be obtained from their scaled counterparts by re-scaling each by $\dfrac{1}{l}$.

The collection of scaled unit cells completely contained inside $\Omega$ is referred to as $\Omega(\Box)$. The remaining partial unit cells are referred to as $\Omega(\lr):=\Omega\backslash\Omega(\Box)$. Note that for some $\hat{\ve{x}}_l\oplus l\Box\in\Omega(\Box)\cup\Omega(\lr)$, it may turn out that $\hat{\ve{x}}_l\not\in\Omega$. Thus, we write $\hat{\ve{x}}_l\in\Omega(\Box)$ and $\hat{\ve{x}}_l\in\Omega(\lr)$ as shorthand for $\hat{\ve{x}}_l\oplus l\Box\subset\Omega(\Box)$ and $\hat{\ve{x}}_l\oplus l\lr\subset\Omega(\lr)$, respectively.

We define the family of charge densities as $\{\rho_l\}_{l \in \scr{I}}$ for each lattice $l\scr{L}$, where $\scr{I}$ represents an indexing set for the different values of the scaling parameter. To establish the polarization theorem \eqref{2ScaleThm}, we require that the charge density is bounded and has zero mean on each subset $\hat{\ve{x}}_l \oplus l\Box \subset \Omega$. However, the charge density must satisfy a stricter condition of local periodicity in order to derive the local Poisson equation \eqref{Poisson} associated with the corrector field; this assumption is a direct consequence of the applicability of Allaire's two-scale theorem \cite{allaire1992homogenization}. It is worth noting that while the polarization theorem \eqref{2ScaleThm} remains valid even when the charge density is not locally periodic, the resulting quantity will not correspond to a macro-scale quantity.

As noted earlier, we require charge neutrality to obtain a convergent energy density. We impose charge neutrality by requiring that 
\begin{equation}
\label{eqn:charge-neutrality}
    {\int_{\hat{\ve{x}}_l\oplus l\Box}\rho_l(\ve{x})\dm\ve{x}=0\qquad\qquad \forall \hat{\ve{x}}_l\oplus l\Box\subset\Omega,~\forall l\in\scr{I}.}
\end{equation}
We employ the dipole scaling that requires that as $l\to 0$, we proportionately increase the magnitude of $\rho_l(\ve{x})$. We introduce the scaled charge density $\rho_l(\ve{x}):=\frac{1}{l}\tilde{\rho}_l(\ve{x})$, which we refer to as the dipole scaling. To justify this scaling, we utilize the corner map to express the charge distribution as --- $\rho_l(\ve{x})=\rho_l(\hat{\ve{x}}_l+\ve{z})=:\rho_l(\hat{\ve{x}}_l,\ve{z})$ where $\ve{z}\in l\Box$. We now introduce the dipole scaling in the second argument as
\begin{equation}
{\rho_l(\ve{x})=\rho_l(\hat{\ve{x}},\ve{z})=\dfrac{1}{l}\tilde{\rho}_l(\hat{\ve{x}},\dfrac{\ve{z}}{l}),\label{2scaleiden}}
\end{equation}
where in the above expression, we are locally scaling the magnitude of the charge density inversely with the size of the unit cell.

    Following \cite{sen2024nonuniqueness}, we provide a simple example.
    \begin{exmp}
    Consider 
    \begin{equation}
        \rho_l(x)=\dfrac{\sin(2\pi x/l)}{l}
    \end{equation}
    and choose the domain as $[0,L]$; the scaled unit cell is $[0,1]$.
    The corner map is 
    $$x=\hat{x}+ly=nl+z=nl+ly,$$
    where $n=[x/l]$, and $[~]$ denotes the floor function.
    
    Using the two-scale identification, we have:
    \begin{equation}
    \tilde{\rho}_l(x,y)=\sin(2\pi n+2\pi y)=\sin(2\pi y)
    \end{equation}

    To check condition \eqref{3.3}, we compute:
    \begin{equation}
        \int_{[0,1]}\tilde{\rho}_l^2 \dm y
        =
        \int_{[0,1]}\sin^2(2\pi y) \dm y 
        = \frac{1}{2}.
    \end{equation}
    
    To check the charge neutrality condition \eqref{eqn:charge-neutrality}, we compute:
    \begin{equation}
        \int_{nl+[0,l]}\rho_l \dm x
        =
        \int_{nl+[0,l]}\dfrac{\sin(2\pi x/l)}{l} \dm x
        =
        \int_{[0,l]}\dfrac{1}{l} \sin(2\pi z/l) \dm z
        =
        0,
    \end{equation}
    where we set $x=nl+z$ and use that it the integrand is $l$-periodic and being integrated over its period.
    \end{exmp}
Further examples are available in \cite{sen2024nonuniqueness}.


\section{Problem Formulation}\label{PF}

We examine two problems; the first is posed on $\Scr{R}^3$ arising from a compactly supported charge distribution, while the second is an interior problem formulated to determine the electrostatic potential within a bounded domain $\Omega$. We determine the homogenized and corrector equations corresponding to the Poisson problem for each.

The boundary is decomposed as $\partial\Omega=\Gamma_d\cup\Gamma_n$ with $\Gamma_d\cap\Gamma_n=\{\}$, where $\partial\Omega$ is the boundary, and $\Gamma_d$ and $\Gamma_n$ are parts of the boundary where Dirichlet and Neumann boundary conditions are specified respectively; $\Gamma_n$ is assumed $\scr{C}^1$.

In the first problem, we consider a sequence of charge distributions $\{\tilde{\rho}_l\}_{l\in\scr{I}}$ in $\Scr{R}^3$ such that $\text{Sppt}(\tilde{\rho}_l)=\Omega~\forall l\in\scr{I}$. The charge distribution is locally periodic with respect to the lattice $l\scr{L}$ with zero mean and $L^2$ bound in each cell. The potential is specified on $\Gamma_d$ with decay boundary conditions at infinity. The family of PDEs are stated as follows:
\begin{equation}
\begin{split}
-\Delta &\Phi_l(\ve{x})=\dfrac{\tilde{\rho}_l(\ve{x})}{l}~~\text{in}~\Scr{R}^3\backslash\Gamma_d,\\
	 &\Phi_l=f~~~\text{on}~\Gamma_d,\\
	 &\Phi_l=0~~~\text{as}~~|\ve{x}|\to\infty.\label{Prob0}\\
     &\Phi_l\in L^2_{loc}(\Scr{R}^3)~~\text{and}~~\nabla\Phi_l\in L^2(\Scr{R}^3),~\forall l\in\scr{I}
\end{split},
\end{equation}
where the last condition ensures that the energy is finite.

In the second problem, we analyze a sequence of charge distributions $\{\tilde{\rho}_l\}_{l\in\scr{I}}$ defined on a domain $\Omega$, which is locally periodic with respect to the lattice $l\scr{L}$ and is $L^2$ bounded with zero mean on each unit cell. A family of partial differential equations associated with these charge distributions $\{\tilde{\rho}_l\}$ is studied, complemented with appropriate boundary conditions. The family of PDEs are stated as follows
\begin{equation}
\begin{split}
-\Delta &\Phi_l(\ve{x})=\dfrac{\tilde{\rho}_l(\ve{x})}{l}~~\text{in}~\Omega,\\
	 &\Phi_l=f~~~\text{on}~\Gamma_d,\\
	 \dfrac{\partial}{\partial n}&\Phi_l=g~~~\text{on}~\Gamma_n,\\
     \nabla&\Phi_l~\text{is bounded uniformly in $L^2(\Omega),~\forall l\in\scr{I}$},\label{Prob}
\end{split}
\end{equation}
where we are imposing the same boundary conditions for each of the problems. The last condition corresponds to the energy being finite.

Since our objective is to compute the polarization in each unit cell, it is reasonable to require that the charge density be $L^2$ bounded within each unit cell. Furthermore, as the charge density is defined over the entire domain, this local $L^2$ boundedness is expected to imply a global $L^2$ bound on the entire domain. To establish this relationship, we impose the condition that the charge density is $L^2$ bounded within each unit cell.
\begin{equation}
    {||\tilde{\rho}_l||_{L^2(\hat{\ve{x}}_l\oplus l\Box)}:=\bigg[\int_{\Box}\tilde{\rho}^2_l(\hat{\ve{x}}+l\ve{y})\dm\ve{y}\bigg]^{\frac{1}{2}}\leq C,~~\forall \ve{x}\in\Omega ,~\forall l\in\scr{I}}.\label{3.3}
\end{equation}
Note that $C$ is independent of $l$.

\begin{theorem}\label{GlobalBound}
    If $||\tilde{\rho}_l||_{L^2(\hat{\ve{x}}_l\oplus l\Box)}<C$ for all $\ve{x}\in\Omega$ and for all $l\in\scr{I}$, then $||\tilde{\rho}_l||_{L^2(\Omega)}<C$ for all $l\in\scr{I}$ and $C$ independent of $l$.
\end{theorem}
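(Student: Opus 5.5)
The plan is to decompose $\Omega$ into the cells of $l\scr{L}$ that meet it, rescale each cell integral to the reference cell $\Box$ using the normalization in \eqref{3.3}, and count cells. The key observation is that the norm in \eqref{3.3} is an integral over the \emph{unscaled} cell $\Box$ (that is, with respect to $\dm\ve{y}$), so
\begin{equation*}
\int_{\hat{\ve{x}}_l\oplus l\Box}\tilde{\rho}_l^2(\ve{x})\dm\ve{x}=l^3\int_{\Box}\tilde{\rho}_l^2(\hat{\ve{x}}_l+l\ve{y})\dm\ve{y}\le l^3C^2 .
\end{equation*}
For partial cells $\hat{\ve{x}}_l\oplus l\lr$ the same bound holds after restricting to $\Omega$, since $\tilde{\rho}_l$ is supported in $\Omega$ and the integrand is nonnegative, so the integral over $l\lr\subset l\Box$ is dominated by the one over $l\Box$.

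Next, by properties (i) and (ii) of the unit cell, $\Omega\subset\Omega(\Box)\cup\Omega(\lr)$ is covered by disjoint translates $\ve{i}\oplus l\Box$, $\ve{i}\in l\scr{L}$. Summing gives
\begin{equation*}
\|\tilde{\rho}_l\|^2_{L^2(\Omega)}\le N_l\,l^3C^2,
\end{equation*}
where $N_l$ is the number of (full or partial) cells meeting $\Omega$. It remains to show $N_l l^3$ is bounded independently of $l$. Each cell has Lebesgue measure $l^3|\det[\ve{e}_1,\ve{e}_2,\ve{e}_3]|=:l^3V$, which is the covolume; this follows from (i)--(ii), since the cells tile $\Scr{R}^3$ periodically. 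By (iii) every cell meeting $\Omega$ lies in the $2lR$-neighbourhood $\Omega_{2lR}$ of $\Omega$, and $l\le1$. Disjointness then yields
\begin{equation*}
N_l\,l^3V\le|\Omega_{2R}|<\infty ,
\end{equation*}
using boundedness of $\Omega$. Hence $\|\tilde{\rho}_l\|_{L^2(\Omega)}\le C\,(|\Omega_{2R}|/V)^{1/2}=:C'$, which is independent of $l$. The constant in the statement is understood as this new $C'$.

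The main obstacle is the measure-theoretic point that each general cell $l\ve{f}\oplus l\Box'$ has measure exactly $l^3V$. This requires measurability, which is assumed in the remark following the unit-cell definition. Granting that, I would prove the volume identity by intersecting a large ball $B(\rho,0)$ with the tiling: the number of lattice points in the ball is $|B|/V+O(\rho^2)$, and by (iii) only cells near the boundary of the ball are cut. Letting $\rho\to\infty$ gives the identity. Alternatively, one can avoid computing the volume exactly: only a lower bound on the cell measure is needed, or one can argue directly that the translates are disjoint and lie in a bounded set. Everything else is bookkeeping.
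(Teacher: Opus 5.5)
Your argument is correct and follows the same route as the paper's proof: split $\Omega$ into full and partial cells, rescale each cell integral to $\Box$ (picking up the factor $l^3$), dominate the partial-cell integrals by the full-cell bound from the hypothesis, and count cells. The only real difference is the counting step. The paper asserts $\sum l^3\le|\Omega|+o(1)$, tacitly assuming unit cell volume and that the outer cell sums converge to $|\Omega|$. Your disjointness-and-diameter bound $N_l\,l^3V\le|\Omega_{2R}|$ uses only boundedness of $\Omega$ and property (iii), handles a general covolume $V$, and holds for every $l\le 1$ rather than only asymptotically.
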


\begin{proof}
We demonstrate that the above bounds also implies that $\exists~C>0$ such that $||\tilde{\rho}_l||_{L^2(\Omega)}<C$
\begin{equation}
\begin{split}
\int_{\Omega}\tilde{\rho}^2_l\dm\ve{x}&=\sum_{\hat{\ve{x}}\in\Omega(\Box)}l^3\int_{\Box}\tilde{\rho}^2_l(\hat{\ve{x}},\ve{y})\dm\ve{y}+\sum_{\hat{\ve{x}}\in\Omega(\lr)}l^3\underbrace{\int_{\lr}\tilde{\rho}^2_l(\hat{\ve{x}},\ve{y})\dm\ve{y}}_{\leq\int_{\Box}\tilde{\rho}^2_l(\hat{\ve{x}},\ve{y})\dm\ve{y}}\\
&\leq \underbrace{\sum_{\hat{\ve{x}}\in\Omega(\Box)\cup\Omega(\lr)}}_{\#(\Omega/l^3)} l^3\underbrace{\int_{\Box}\tilde{\rho}^2_l(\hat{\ve{x}},\ve{y})\dm\ve{y}}_{\leq C} \leq C|\Omega|+o(1)\label{forward}
\end{split}
\end{equation}
where $\#(\Omega/l^3)$ represents the number of boxes of size $l$ covering $\Omega$. The fact that $\sum_{\#(\Omega/l^3)}l^3$ is an approximation to $|\Omega|$ form `above' is captured by the $o(1)$ term, i.e., we have
\beol{\sum_{\hat{\ve{x}}\in\Omega(\Box)}l^3\leq |\Omega|\leq \sum_{\hat{\ve{x}}\in\Omega(\Box)\cup\Omega(\lr)}l^3~~\forall l\in\scr{I};}
we justify that we will get $|\Omega|$ in the limit by adding an $o(1)$ term which goes to 0 as $l\to 0$.     
\end{proof}

Recall the two-scale identification process outlined in \eqref{2scaleiden}. Since $\{\rho_l\}_{\scr{I}}$ is defined on $\Omega$, it depends on a single variable. In contrast, a two-scale limit requires two variables. The mapping of a single-variable function to a two-variable function is achieved through the two-scale identification process. To extract a well-defined two-scale limit, it is necessary to demonstrate that for a given sequence in $L^2(\Omega)$, there exists a corresponding sequence that is bounded in the space $L^2(\Omega \times \Box)$. This allows extraction of a weak limit in $L^2(\Omega \times \Box)$. 

\begin{theorem}\label{2sclidenbdd}
  Let $\{\tilde{\rho}_l\}_{l \in \scr{I}}$ be a bounded sequence in $L^2(\Omega)$ be a given sequence of functions. For functions defined through the two-scale identification process \eqref{2scaleiden}, it follows that $\{\tilde{\rho}_l(\hat{\ve{x}}, \ve{y})\}_{l \in \scr{I}}$ is a bounded sequence in $L^2(\Omega \times \Box)$.
\end{theorem}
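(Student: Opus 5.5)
The plan is to compute the $L^2(\Omega\times\Box)$ norm of the two-scale identified function directly and reduce it to a cell-by-cell sum of the type already handled in the proof of Theorem \ref{GlobalBound}. By \eqref{2scaleiden}, $\tilde{\rho}_l(\hat{\ve{x}},\ve{y})=\tilde{\rho}_l(\hat{\ve{x}}_l+l\ve{y})$ for $\ve{y}\in\Box$. For fixed $\ve{y}$, this function of $\ve{x}$ is constant on each scaled cell $\hat{\ve{x}}_l\oplus l\Box$, because the corner map is constant there.

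First, I split $\Omega$ into the cells of $\Omega(\Box)\cup\Omega(\lr)$ and use Fubini on each piece. For each corner $\hat{\ve{x}}$, the integral over $\ve{x}\in(\hat{\ve{x}}\oplus l\Box)\cap\Omega$ of an $\ve{x}$-independent quantity is at most $l^3|\Box|$ times that quantity. This gives
\begin{equation*}
\int_{\Omega}\int_{\Box}\tilde{\rho}_l^2(\hat{\ve{x}},\ve{y})\dm\ve{y}\dm\ve{x}\;\leq\;\sum_{\hat{\ve{x}}\in\Omega(\Box)\cup\Omega(\lr)}l^3|\Box|\int_{\Box}\tilde{\rho}_l^2(\hat{\ve{x}}+l\ve{y})\dm\ve{y}.
\end{equation*}
Since $\tilde{\rho}_l$ is supported in $\Omega$ (extended by zero), the change of variables $\ve{x}'=\hat{\ve{x}}+l\ve{y}$ turns each summand into $|\Box|\int_{(\hat{\ve{x}}\oplus l\Box)\cap\Omega}\tilde{\rho}_l^2\dm\ve{x}'$. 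Summing over the disjoint cells (condition (ii)) then gives exactly $|\Box|\,\|\tilde{\rho}_l\|^2_{L^2(\Omega)}$, with no error term. Here $|\Box|=1$ for the simple cell, and in general $|\Box|=|\det(\ve{e}_i)|$ by conditions (i)--(ii).

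The hypothesis that $\{\tilde{\rho}_l\}$ is bounded in $L^2(\Omega)$ then gives a uniform bound. Alternatively, one can invoke Theorem \ref{GlobalBound} under the cellwise bound \eqref{3.3}.

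The main obstacle is the partial cells. In $\Omega(\lr)$ the corner $\hat{\ve{x}}$ may lie outside $\Omega$, and $\ve{y}$ ranges over all of $\Box$ even though only part of the cell meets $\Omega$. I will handle this by extending $\tilde{\rho}_l$ by zero outside $\Omega$, which makes the identity above an inequality in the right direction. A secondary point is measurability of $(\ve{x},\ve{y})\mapsto\tilde{\rho}_l(\hat{\ve{x}}_l+l\ve{y})$, which Fubini requires. It holds because the corner map is a countable simple function on measurable cells (Remark following conditions (i)--(iii)), so the composition is measurable.
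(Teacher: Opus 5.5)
Your proof is correct and follows essentially the same route as the paper. Both arguments split $\Omega$ into full and partial cells, use that the identified function is constant in its first argument on each cell, and trade the cell sum $\sum l^3(\cdot)$ for the integral over $\Omega$. Your extension by zero on the partial cells, which gives an inequality where the paper states an equality, together with the explicit factor $|\Box|$, simply makes precise what the paper does with the characteristic function $\mathbbm{1}_{\Omega}$.
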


\begin{proof}
Noting that $||\tilde{\rho}_l||_{L^2(\Omega)} \leq C$,
\begin{equation}
\begin{split}
C\geq\int_{\Omega}\tilde{\rho}^2_l\dm\ve{x}&=\sum_{\hat{\ve{x}}\in\Omega(\Box)}l^3\int_{\Box}\tilde{\rho}^2_l(\hat{\ve{x}},\ve{y})\dm\ve{y}+\sum_{\hat{\ve{x}}\in\Omega(\lr)}l^3\int_{\lr}\tilde{\rho}^2_l(\hat{\ve{x}},\ve{y})\dm\ve{y}\\
&= \sum_{\hat{\ve{x}}\in\Omega(\Box)\cup\Omega(\lr)} l^3\int_{\Box}\mathbbm{1}_{\Omega}(\ve{x})\tilde{\rho}^2_l(\hat{\ve{x}},\ve{y})\dm\ve{y},
\end{split}
\end{equation}
where the characteristic function $\mathbbm{1}_{\Omega}$ is employed to substitute the integral over the partial unit cells with the integrals over the corresponding full unit cells.

Observe that the identified sequence $\tilde{\rho}_l(\hat{\ve{x}}, \ve{y})$ remains constant with respect to the first argument. Consequently, the summation can be replaced by an integral, yielding:
\begin{equation}
C\geq\int_{\Omega\times\Box}\tilde{\rho}^2_l(\hat{\ve{x}},\ve{y})\dm\ve{x}\dm\ve{y},
\end{equation}
where the characteristic function on $\Omega$ is omitted, as the integration is already performed over $\Omega$. Consequently, we have demonstrated that the identified sequence is bounded in $L^2(\Omega \times \Box)$.    
\end{proof}

Our formulation is presented with respect to the scaled charge density $\tilde{\rho}_l(\ve{x}) \left(= l\rho_l(\ve{x})\right)$. This scaling is chosen to ensure the extraction of a dipole moment in the limit. If the objective were to extract a quadrupole moment in the homogenized limit, the appropriate scaling would involve scaling the charge density with $l^{-2}$ while imposing the conditions that the zeroth- and first-order moments vanish. A similar approach can be applied to derive all higher-order moments.

The homogenization process is organized as follows: for each $l \in \scr{I}$, the lattice $l\scr{L}$ is introduced, and the domain $\Omega$ is partitioned into two distinct subsets: a collection of unit cells fully contained within $\Omega$, denoted as $\Omega(\Box)$, and a collection of partial unit cells that are not entirely contained within $\Omega$, denoted as $\Omega(\lr)$. Leveraging charge neutrality within each unit cell, the charge distribution in each unit cell is replaced by the net polarization. In contrast, the partial unit cells, which are not charge neutral, contribute a surface charge denoted by $\sigma$. This surface charge is assigned to the boundary $\partial\Omega$, resulting in a homogenized problem defined on $\Omega(\Box) \cup \partial\Omega$. In the limit as $l \to 0$, the homogenized problem is posed on the entire domain $\Omega$. For a visual explanation of this process, along with illustrative examples, the interested reader is directed to Section 2 of \cite{sen2024nonuniqueness}.

\section{Weak Formulation}\label{WF}

We begin by presenting the weak formulation for problem \eqref{Prob0}. To this end, we multiply \eqref{Prob0} by a test function defined as $[\psi]_l(\ve{x}) := \psi_0(\ve{x}) + l\psi_1\left(\ve{x}, \frac{\ve{x}}{l}\right)$. Integrating over $\Scr{R}^3$, we obtain:
\begin{equation}
{\int_{\Scr{R}^3}-\Delta\Phi_l(\ve{x})[\psi]_l(\ve{x})\dm\ve{x}=\int_{\Omega}\dfrac{\tilde{\rho}_l(\ve{x})}{l}[\psi]_l(\ve{x})\dm\ve{x},}
\end{equation}
where $\psi_1\in D\big(\Omega,\scr{C}^{\infty}_{\#}(\Box)\big)$ and $\psi_0\in V:=\{u\in D(\Scr{R}^3)~\vert~u(\Gamma_d)=0\}$. 
    Additionally, we restrict our attention to domains for which the surface charge $\sigma$ is in $L^1(\partial\Omega)$; this is to enable us to go from \eqref{4.3o} to \eqref{WeakForm0} using the result \eqref{2ScaleThm} below.
Integrating by parts,
\begin{equation}            
\int_{\Scr{R}^3}\nabla\Phi_l\cdot\big(\nabla_{\ve{x}}\psi_0+\nabla_{\ve{y}}\psi_1+l\nabla_{\ve{x}}\psi_1\big)\dm\ve{x}=\int_{\Omega}\psi_0\dfrac{\tilde{\rho}_l}{l}\dm\ve{x}+\int_{\Omega}\psi_1\tilde{\rho}_l\dm\ve{x}.
\end{equation}

We pass $l$ to the limit and extract the following equation:
\begin{equation}
    \begin{split}
&\int_{\Scr{R}^3\times\Box}\big(\nabla_{\ve{x}}\Phi_0+\nabla_{\ve{y}}\Phi_1\big)\cdot\big(\nabla_{\ve{x}}\psi_0+\nabla_{\ve{y}}\psi_1\big)\dm\ve{x}\dm\ve{y}=\lim_{l\to0}\bigg[\int_{\Omega}\psi_0\dfrac{\tilde{\rho}_l}{l}\dm\ve{x}+\int_{\Omega}\psi_1\tilde{\rho}_l\dm\ve{x}\bigg],        
    \end{split}\label{4.3o}
\end{equation}
where $\Phi_1(\ve{x},\ve{y})\in~L^2[\Scr{R}^3\backslash\Gamma_d,H^1_{\#}(\Box)/\Scr{R}]$ \cite{allaire1992homogenization}. We refer the reader to \cite{tian2012dielectric} for the reasoning used to demonstrate the existence of the 2-scale limit and $\Phi_1(\ve{x},\ve{y})$.

By utilizing the weak limit of the first term on the right-hand side, the proof of which is provided in Section \ref{2sTm}, along with the two-scale limit of the final term, we derive the following:
\begin{equation}
    \begin{split}
&\int_{\Scr{R}^3\times\Box}\big(\nabla_{\ve{x}}\Phi_0(\ve{x})+\nabla_{\ve{y}}\Phi_1(\ve{x},\ve{y})\big)\cdot\big(\nabla_{\ve{x}}\psi_0(\ve{x})+\nabla_{\ve{y}}\psi_1(\ve{x},\ve{y})\big)\dm\ve{x}\dm\ve{y}\\
&=\int_{\partial\Omega}\psi_0(\ve{x})\sigma(\ve{x}) \dm S_{\ve{x}}+\int_{\Omega}\nabla_{\ve{x}}\psi_0(\ve{x})\cdot\ve{p}_0(\ve{x}) \dm\ve{x}+
\int_{\Omega\times\Box}\psi_1(\ve{x},\ve{y})\tilde{\rho}_0(\ve{x},\ve{y})\dm\ve{y}\dm\ve{x},   
    \end{split}\label{WeakForm0}
\end{equation}
where in the above, we utilized a compactness theorem to extract a two-scale function $\tilde{\rho}_0(\ve{x},\ve{y})$ \cite{allaire1992homogenization}, the polarization $\ve{p}_0(\ve{x})$ and a surface charge $\sigma(\ve{x})$. The polarization $\ve{p}_0$ is defined as $\ve{p}_0(\ve{x}):=\int_{\Box}\ve{y}\tilde{\rho}_0(\ve{x},\ve{y})\dm\ve{y},$ while the surface charge $\sigma$ is defined as $\sigma(\ve{x})=\lim_{l\to0}\frac{1}{K_l}\int_{\lr}\tilde{\rho}_l(\ve{x},\ve{y})\dm\ve{y}$ where $K_l$ has been defined in Section \ref{SurfCrgDnsty}.

Setting $\psi_0(\ve{x})=0$ yields:
\begin{equation}
    \int_{\Scr{R}^3\times\Box}[\nabla_{\ve{x}}\Phi_0(\ve{x})+\nabla_{\ve{y}}\Phi_1(\ve{x},\ve{y})]\cdot\nabla_{\ve{y}}\psi_1(\ve{x},\ve{y})\dm\ve{y}\dm\ve{x}=\int_{\Omega\times\Box}\psi_1(\ve{x},\ve{y})\tilde{\rho}_0(\ve{x},\ve{y})\dm\ve{y}\dm\ve{x},
\end{equation}
performing integration by parts,
\begin{equation}
    \begin{split}
&\int_{\Scr{R}^3\times\Box}\nabla_{\ve{y}}\cdot\big(\psi_1[\nabla_{\ve{x}}\Phi_0+\nabla_{\ve{y}}\Phi_1]\big)\dm\ve{y}\dm\ve{x}-\int_{\Scr{R}^3\times\Box}\Delta_{\ve{y}}\Phi_1\psi_1\dm\ve{y}\dm\ve{x}=\int_{\Omega\times\Box}\psi_1\tilde{\rho}_0\dm\ve{y}\dm\ve{x},
    \end{split}
\end{equation}
where, in the above, we have utilized the fact that $\nabla_{\ve{y}}\cdot\nabla_{\ve{x}}\Phi_0(\ve{x})=0$.  By leveraging the property that the integral of the divergence of a periodic function over its period is zero, we derive the following:
\begin{equation}
\int_{\Scr{R}^3\times\Box}\Big[\psi_1\Delta_{\ve{y}}\Phi_1+\mathbbm{1}_{\Omega}\psi_1\tilde{\rho}_0\Big]\dm\ve{y}\dm\ve{x}=0~~~\forall\psi_1(\ve{x},\ve{y})\in L^2[\Scr{R}^3\backslash\Gamma_d,H^1_{\#}(\Box)/\Scr{R}].
\end{equation}

Utilizing the density of $\psi_1$, we derive the following PDE:
\begin{equation}
    \begin{split}
        &-\Delta_{\ve{y}}\Phi_1(\ve{x},\ve{y})=\mathbbm{1}_{\Omega}(\ve{x})\tilde{\rho}_0(\ve{x},\ve{y})~\text{in}~\Scr{R}^3\backslash\Gamma_d\times\Box\\
&~~~\Phi_1(\ve{x},\ve{y}) \in~~L^2[\Scr{R}^3\backslash\Gamma_d;H^1_{\#}(\Box)/\Scr{R}],
    \end{split}
    \label{Poisson0}
\end{equation}
where it is noted that $\mathbbm{1}_{\Omega}(\partial\Omega)=0$, highlighting that the boundary does not influence the cell equation. As will become evident later, the discontinuity in the charge density at $\partial\Omega$ impacts only the macro-scale equation.

Setting $\psi_1(\ve{x},\ve{y})=0$ in \eqref{WeakForm0},
\bml{&\int_{\Scr{R}^3\times\Box}\nabla_{\ve{x}}\psi_0\cdot(\nabla_{\ve{x}}\Phi_0+\nabla_{\ve{y}}\Phi_1)\dm\ve{y}\dm\ve{x}=\int_{\partial\Omega}\psi_0\sigma \dm S_{\ve{x}}
+\int_{\Omega}\nabla_{\ve{x}}\psi_0\cdot \ve{p}_0\dm\ve{x},}
which, after simplification, yields:
\beol{\int_{\Scr{R}^3}\nabla_{\ve{x}}\Phi_0\cdot\nabla_{\ve{x}}\psi_0\dm\ve{x}=\int_{\partial\Omega}\psi_0\big(\sigma+\ve{p}_0\cdot \ve{n}\big) \dm S_{\ve{x}}-\int_{\Omega}\psi_0\divergence\ve{p}_0\dm\ve{x},\label{(4.10)}}
where we have employed the fact that $\nabla_{\ve{y}}\Phi_1$ had $0$-mean in $\Box$.

To derive the strong form, we set $\text{Sppt}(\psi_0)\subset\Scr{R}^3\backslash\overline{\Omega}$. Substituting into \eqref{(4.10)} yields:
\beol{\int_{\Scr{R}^3}\Delta_{\ve{x}}\Phi_0(\ve{x})\psi_0(\ve{x})\dm\ve{x}=0,}
using the density of $\psi_0$, 
\beol{\Delta_{\ve{x}}\Phi_0(\ve{x})=0~\text{in}~\Scr{R}^3\backslash\overline{\Omega}.}

We now set $\text{Sppt}(\psi_0)\subset\Omega$.  Substituting into \eqref{(4.10)} results in:
\beol{\int_{\Omega}\Delta_{\ve{x}}\Phi_0\psi_0\dm\ve{x}=\int_{\Omega}\psi_0\divergence\ve{p}_0\dm\ve{x},}
using the density of $\psi_0$, we get
\beol{\Delta_{\ve{x}}\Phi_0(\ve{x})=\divergence\ve{p}_0(\ve{x})~~\text{in}~\Omega.}

Next, we define $\hat{\Omega}:=\text{Sppt}(\psi_0)\supset\partial\Omega$. Utilizing the condition that $\psi_0\in V$, we restrict the choice to include only  $\partial\Omega\backslash\Gamma_d$. Consequently, equation \eqref{(4.10)} is transformed into:

\begin{equation}
    \begin{split}
&\int_{\hat{\Omega}}\nabla_{\ve{x}}\Phi_0\cdot\nabla_{\ve{x}}\psi_0\dm\ve{x}=\int_{\partial\Omega\backslash\Gamma_d}\psi_0\big(\sigma+\ve{p}_0\cdot\ve{n}\big)\dm S_{\ve{x}}-\int_{\Omega\cap\hat{\Omega}}\psi_0\divergence\ve{p}_0\dm\ve{x},\\
\implies&\int_{\big(\hat{\Omega}\cap\Omega^c\big)\cup\big(\hat{\Omega}\cap\Omega\big)}\nabla_{\ve{x}}\Phi_0\cdot\nabla_{\ve{x}}\psi_0\dm\ve{x}=\int_{\partial\Omega\backslash\Gamma_d}\psi_0\big(\sigma+\ve{p}_0\cdot\ve{n}\big)\dm S_{\ve{x}}-\int_{\Omega\cap\hat{\Omega}}\psi_0\divergence\ve{p}_0\dm\ve{x},\\
\implies&\int_{\partial(\hat{\Omega}\cap\Omega^c)}\dfrac{\partial\Phi_0}{\partial n}\psi_0\dm S_{\ve{x}}-\int_{\hat{\Omega}\cap\Omega^c}\Delta_{\ve{x}}\Phi_0\psi_0\dm\ve{x}+\int_{\partial(\hat{\Omega}\cap\Omega)}\dfrac{\partial\Phi_0}{\partial n}\psi_0\dm S_{\ve{x}}-\int_{\hat{\Omega}\cap\Omega}\Delta_{\ve{x}}\Phi_0\psi_0\dm\ve{x}\\
&=\int_{\partial\Omega\backslash\Gamma_d}\psi_0\big(\sigma+\ve{p}_0\cdot\ve{n}\big)\dm S_{\ve{x}}-\int_{\Omega\cap\hat{\Omega}}\psi_0\divergence\ve{p}_0\dm\ve{x}.
    \end{split}
\end{equation}

Utilizing the fact that $\Delta_{\ve{x}} \Phi_0 = 0$ in $\Omega^c$, while $\Delta_{\ve{x}} \Phi_0 = \divergence \, \ve{p}_0$ in $\Omega$, and noting that $\psi_0$ is non-zero on $\partial\Omega \setminus \Gamma_d$, we obtain:
\beol{\int_{\partial\Omega\backslash\Gamma_d}\bigg(\Bigg\llbracket\dfrac{\partial\Phi_0}{\partial n}\Bigg\rrbracket-\sigma-\ve{p}_0\cdot\ve{n}\bigg)\psi_0=0.}

Using density of $\psi_0$,
\beol{\Bigg\llbracket\dfrac{\partial\Phi_0}{\partial n}\Bigg\rrbracket=\sigma+\ve{p}_0\cdot\ve{n}=\sigma_t~~\text{on}~\partial\Omega\backslash\Gamma_d,}
where $\sigma_t$ will be referred to as the total surface charge.

Combining all the equations, we obtain:
\begin{equation}
    \begin{split}
\Delta_{\ve{x}}&\Phi_0(\ve{x})=\mathbbm{1}_{\Omega}\divergence\ve{p}_0~~\text{in}~\Scr{R}^3,\\
	 &\Phi_0=f~\text{on}~\Gamma_d,\\
	 &\Bigg\llbracket\dfrac{\partial\Phi_0}{\partial n}\Bigg\rrbracket=\ve{p}_0\cdot \ve{n}+\sigma~\text{on}~\partial\Omega\backslash\Gamma_d.        
    \end{split}\label{Soln0}
\end{equation}
	 
\begin{rmrk}
{Observe that the surface integral term in the polarization theorem provides a surface charge for the entire boundary. However, in the strong form, the surface charge term appears only on $\partial\Omega \setminus \Gamma_d$. This raises the question of where the remaining surface charge is accounted for. The answer lies in the interpretation of $\Gamma_d$ as a model for the electrodes of a battery. The battery enforces a potential $f$ on $\Gamma_d$ by supplying charge. Specifically, it provides a charge $\sigma_f$ such that $\sigma_f + \ve{p}_0 \cdot \ve{n} + \sigma$ corresponds to the charge determined by the application of the Dirichlet-to-Neumann map, thereby compensating for the surface charge distribution arising from the partial unit cells.
}
\end{rmrk}

We now present the weak formulation for problem \eqref{Prob}. The governing equation is multiplied by a test function $[\psi]_l(\ve{x})$. Integrating over $\Omega$, we obtain:
\begin{equation}
{\int_{\Omega}-\Delta\Phi_l(\ve{x})[\psi]_l(\ve{x})\dm\ve{x}=\int_{\Omega}\dfrac{\tilde{\rho}_l(\ve{x})}{l}[\psi]_l(\ve{x})\dm\ve{x},}
\end{equation}
where $\psi_1 \in D\big(\Omega, \scr{C}^{\infty}_{\#}(\Box)\big)$ and $\psi_0 \in V := \{u \in D(\Omega) \mid u\left({\Gamma_d}\right) = 0\}$. By applying integration by parts, we obtain:
\begin{equation}
    \begin{split}        &\int_{\Omega}\nabla\Phi_l(\ve{x})\cdot\big(\nabla_{\ve{x}}\psi_0(\ve{x})+\nabla_{\ve{y}}\psi_1(\ve{x},\dfrac{\ve{x}}{l})+l\nabla_{\ve{x}}\psi_1(\ve{x},\dfrac{\ve{x}}{l})\big)\dm\ve{x}\\
&=\int_{\partial\Omega}\dfrac{\partial\Phi_l}{\partial n}\big(\psi_0(\ve{x})+l\psi_1(\ve{x},\dfrac{\ve{x}}{l})\big)\dm S_{\ve{x}}+\int_{\Omega}\psi_0(\ve{x})\dfrac{\tilde{\rho}_l(\ve{x})}{l}\dm\ve{x}+\int_{\Omega}\psi_1(\ve{x},\dfrac{\ve{x}}{l})\tilde{\rho}_l(\ve{x})\dm\ve{x},\\
\implies&\int_{\Omega}\nabla\Phi_l\cdot\big(\nabla_{\ve{x}}\psi_0(\ve{x})+\nabla_{\ve{y}}\psi_1(\ve{x},\dfrac{\ve{x}}{l})+l\nabla_{\ve{x}}\psi_1(\ve{x},\dfrac{\ve{x}}{l})\big)\dm\ve{x}\\
&=\int_{\Gamma_n}g(\ve{x})\big(\psi_0(\ve{x})+l\psi_1(\ve{x},\dfrac{\ve{x}}{l})\big)\dm S_{\ve{x}}+\int_{\Omega}\psi_0(\ve{x})\dfrac{\tilde{\rho}_l(\ve{x})}{l}\dm\ve{x}+\int_{\Omega}\psi_1(\ve{x},\dfrac{\ve{x}}{l})\tilde{\rho}_l(\ve{x})\dm\ve{x}.
    \end{split}
\end{equation}

Taking the limit as $l \to 0$,
\bml{&\int_{\Omega\times\Box}\big(\nabla_{\ve{x}}\Phi_0(\ve{x})+\nabla_{\ve{y}}\Phi_1(\ve{x},\ve{y})\big)\cdot\big(\nabla_{\ve{x}}\psi_0(\ve{x})+\nabla_{\ve{y}}\psi_1(\ve{x},\ve{y})\big)\dm\ve{x}\dm\ve{y}\\
&=\int_{\Gamma_n}\psi_0(\ve{x})g(\ve{x})\dm S_{\ve{x}}+\lim_{l\to 0}\bigg[\int_{\Omega}\psi_0(\ve{x})\dfrac{\tilde{\rho}_l(\ve{x})}{l}\dm\ve{x}+\int_{\Omega}\psi_1(\ve{x},\dfrac{\ve{x}}{l})\tilde{\rho}_l(\ve{x})\dm\ve{x}\bigg],}
where $\Phi_1(\ve{x},\ve{y})\in~L^2[\Omega,H^1_{\#}(\Box)/\Scr{R}]$ \cite{allaire1992homogenization}. See Appendix \ref{Lax-Milgram} for the justification for the 2-scale limit.

Utilizing the weak limit of the second term on the right-hand side, as proven in Section \ref{2sTm}, and incorporating the two-scale limit of the final term, we derive the following:
\bml{&\int_{\Omega\times\Box}\big(\nabla_{\ve{x}}\Phi_0(\ve{x})+\nabla_{\ve{y}}\Phi_1(\ve{x},\ve{y})\big)\cdot\big(\nabla_{\ve{x}}\psi_0(\ve{x})+\nabla_{\ve{y}}\psi_1(\ve{x},\ve{y})\big)\dm\ve{x}\dm\ve{y}\\
&=\int_{\Gamma_n}\psi_0(\ve{x})g(\ve{x})\dm S_{\ve{x}}+\int_{\Omega}\nabla_{\ve{x}}\psi_0(\ve{x}).\ve{p}_0(\ve{x}) \dm\ve{x}\\
&+
\int_{\partial\Omega}\psi_0(\ve{x})\sigma(\ve{x})\dm S_{\ve{x}}+\int_{\Omega\times\Box}\psi_1(\ve{x},\ve{y})\tilde{\rho}_0(\ve{x},\ve{y})\dm\ve{y}\dm\ve{x},\label{WeakForm}}
where the two scale function $\tilde{\rho}_0(\ve{x},\ve{y})$, polarization $\ve{p}_0$ and surface charge $\sigma$ are as defined earlier.

If we set $\psi_0(\ve{x})=0$ in \eqref{WeakForm},
\beol{\int_{\Omega\times\Box}[\nabla_{\ve{x}}\Phi_0(\ve{x})+\nabla_{\ve{y}}\Phi_1(\ve{x},\ve{y})]\cdot\nabla_{\ve{y}}\psi_1(\ve{x},\ve{y})\dm\ve{y}\dm\ve{x}=\int_{\Omega\times\Box}\psi_1(\ve{x},\ve{y})\tilde{\rho}_0(\ve{x},\ve{y})\dm\ve{y}\dm\ve{x},}
performing integration by parts,
\bml{&\int_{\Omega\times\Box}\nabla_{\ve{y}}\cdot\big(\psi_1[\nabla_{\ve{x}}\Phi_0+\nabla_{\ve{y}}\Phi_1]\big)\dm\ve{y}\dm\ve{x}=\int_{\Omega\times\Box}\psi_1\Delta_{\ve{y}}\Phi_1\dm\ve{y}\dm\ve{x}+\int_{\Omega\times\Box}\psi_1\tilde{\rho}_0\dm\ve{y}\dm\ve{x},}
where we utilized the property that $\nabla_{\ve{y}} \cdot \nabla_{\ve{x}} \Phi_0(\ve{x}) = 0$. Furthermore, employing the result that the integral of the divergence of a periodic function over its period is zero, we obtain:
\beol{\int_{\Omega\times\Box}\big[\Delta_{\ve{y}}\Phi_1(\ve{x},\ve{y})+\tilde{\rho}_0(\ve{x},\ve{y})\Big]\psi_1(\ve{x},\ve{y})\dm\ve{y}\dm\ve{x}=0 ~~\forall\psi_1(\ve{x},\ve{y})\in L^2[\Omega,H^1_{\#}(\Box)/\Scr{R}].}
Using the density of $\psi_1$, we obtain:
\bml{&-\Delta_{\ve{y}}\Phi_1(\ve{x},\ve{y})=\tilde{\rho}_0(\ve{x},\ve{y})~\text{in}~\Omega\times\Box\\
&~~~\Phi_1(\ve{x},\ve{y}) \in~~L^2[\Omega;H^1_{\#}(\Box)/\Scr{R}].\label{Poisson}}

We now set $\psi_1(\ve{x},\ve{y})=0$ in \eqref{WeakForm},
\beol{\int_{\Omega\times\Box}(\nabla_{\ve{x}}\Phi_0+\nabla_{\ve{y}}\Phi_1)\cdot\nabla_{\ve{x}}\psi_0\dm\ve{y}\dm\ve{x}=\int_{\Gamma_n}\psi_0\big(g+\sigma\big)\dm S_{\ve{x}}+\int_{\Omega}\nabla_{\ve{x}}\psi_0\cdot \ve{p}_0\dm\ve{x},}
which, upon simplification, gives us
\bmlnn{&\int_{\Omega}\nabla_{\ve{x}}\Phi_0(\ve{x})\cdot\nabla_{\ve{x}}\psi_0(\ve{x})\dm\ve{x}+\int_{\Omega}\bigg[\int_{\Box}\nabla_{\ve{y}}\Phi_1(\ve{x},\ve{y})\dm\ve{y}\bigg]\cdot\nabla_{\ve{x}}\psi_0(\ve{x}) 
\dm\ve{x}\\
=&\int_{\Gamma_n}\psi_0(\ve{x})\big(g(\ve{x})+\sigma(\ve{x})\big)\dm S_{\ve{x}}+\int_{\partial\Omega}\psi_0(\ve{x})\ve{p}_0(\ve{x})\cdot\ve{n} \dm S_{\ve{x}}-\int_{\Omega}\psi_0(\ve{x})\divergence \ve{p}_0\dm\ve{x},\\
\implies&\int_{\Omega}\nabla_{\ve{x}}\Phi_0\cdot\nabla_{\ve{x}}\psi_0\dm\ve{x}=\int_{\Gamma_n}\psi_0(\ve{x})\big(g(\ve{x})+\sigma(\ve{x})+\ve{p}_0\cdot\ve{n}\big)\dm S_{\ve{x}}-\int_{\Omega}\psi_0\divergence\ve{p}_0\dm\ve{x},}
where we used the fact that $\nabla_{\ve{y}}\Phi_1$ has $0$-mean on $\Box$. Simplifying the above,
\beol{\int_{\Omega}\Big[-\Delta\Phi_0+\divergence \ve{p}_0\Big]\psi_0\dm\ve{x}=\int_{\Gamma_n}\psi_0(\ve{x})\big(-\dfrac{\partial\Phi_0}{\partial n}+g(\ve{x})+\sigma(\ve{x})+\ve{p}_0\cdot\ve{n}\big)\dm S_{\ve{x}}.}

Using the density of $\psi_0$,
\bml{\Delta_{\ve{x}}&\Phi_0(\ve{x})=\divergence\ve{p}_0~~\text{in}~\Omega\\
	 &\Phi_0=f~\text{on}~\Gamma_d\\
	 \dfrac{\partial}{\partial n}&\Phi_0=g+\ve{p}_0\cdot \ve{n}+\sigma~\text{on}~\Gamma_n.\label{Soln}}

\begin{rmrk}
{A standard equation in electrostatics is defined on $\Scr{R}^3$, as the potential extends throughout the entire space. Consequently, the previously derived strong form is the more physically meaningful representation. The motivation for deriving the strong forms for the ``interior problem" lies in its similarity to mechanics problems. Specifically, the problem could be reformulated with the charge replaced by a force, yielding an analogous limit equation for the homogenized displacement resulting from torque. This equivalence provides a broader range of physical phenomena to draw intuition from when analyzing specific examples. To develop intuition for the homogenization process, the ``interior" problem is utilized in \cite{sen2024nonuniqueness}, Section 2, to homogenize a one-dimensional problem.
}
\end{rmrk}

\begin{rmrk}
{In the homogenized limit, the macro-scale and micro-scale equations are decoupled, resulting in the corresponding potentials being independent. Two-scale convergence \cite{allaire1992homogenization} provides a framework to approximate the potential for finite $l$ as follows:
\beol{
\Phi_l(\ve{x}) = \Phi_0(\ve{x}) + l\Phi_1\left(\ve{x}, \frac{\ve{x}}{l}\right), \label{2-scale approx}
}
where it is assumed that $l$ is sufficiently small.
}
\end{rmrk}

\section{Polarization Theorem}\label{2sTm}
In this section, we rigorously establish the polarization theorem referenced during the derivation of the weak limit in Section \ref{WF}.

\begin{theorem}\label{PolThm}
Let $\{\tilde{\rho}_l\}_{l \in \scr{I}} : \Omega \to \Scr{R}$ denote a family of charge distributions satisfying the following constraints on the charge density:
\bml{
||\tilde{\rho}_l||_{L^2(\hat{\ve{x}}_l \oplus l\Box)} &:= \bigg[\int_{\Box} \tilde{\rho}^2_l(\hat{\ve{x}} + l\ve{y}) \, \dm\ve{y}\bigg]^{\frac{1}{2}} \leq C, \quad \forall \ve{x} \in \Omega,\\
\int_{\hat{\ve{x}}_l \oplus l\Box} \tilde{\rho}_l(\ve{x}) \, \dm\ve{x} &= 0, \qquad \forall \hat{\ve{x}}_l \oplus l\Box \subset \Omega,~\forall l \in \scr{I},
}
along with the following scaling constraints on the domain $\Omega$:
\bml{
\#(\Omega(\Box)) &= C\dfrac{|\Omega|}{l^3} + o(1), \\
\#(\Omega(\lr)) &= C\dfrac{|\partial\Omega|}{l^2} + o(1),\label{good}
}
where $o(1)$ denotes an error term that approaches zero as $l \to 0$. These conditions are assumed to hold for any arbitrary choice of unit cell.

Under the above assumptions, it can be established that, for all $\varphi \in D(\overline{\Omega})$, the following limiting relation holds:
\beol{
\lim_{l \to 0} \int_{\Omega} \varphi(\ve{x}) \dfrac{\tilde{\rho}_l(\ve{x})}{l} \, \dm\ve{x} =
\int_{\Omega \times \Box} \nabla_x \varphi(\ve{x}) \cdot \ve{y} \rho_0(\ve{x}, \ve{y}) \, \dm\ve{y} \dm\ve{x} +
\int_{\partial\Omega} \sigma(\ve{x}) \varphi(\ve{x}) \, \dm\ve{x}, \label{2ScaleThm}
}
where $\rho_0 \in L^2(\Omega \times \Box)$ represents the two-scale limit of the sequence of charge distributions, and $\sigma \in L^1(\partial\Omega)$ denotes the surface charge distribution arising from the partial unit cells.
\end{theorem}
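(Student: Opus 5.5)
We split the integral over $\Omega$ into the contribution of the full cells $\Omega(\Box)$ and the partial cells $\Omega(\lr)$, exactly as in the heuristic of Section \ref{Intuition} and in the proof of Theorem \ref{GlobalBound}:
\begin{equation*}
\int_{\Omega}\varphi\dfrac{\tilde{\rho}_l}{l}\dm\ve{x}=\sum_{\hat{\ve{x}}\in\Omega(\Box)}l^{2}\int_{\Box}\varphi(\hat{\ve{x}}+l\ve{y})\tilde{\rho}_l(\hat{\ve{x}},\ve{y})\dm\ve{y}+\sum_{\hat{\ve{x}}\in\Omega(\lr)}l^{2}\int_{\lr}\varphi(\hat{\ve{x}}+l\ve{y})\tilde{\rho}_l(\hat{\ve{x}},\ve{y})\dm\ve{y}.
\end{equation*}
The two pieces are treated separately: the first produces the bulk dipole term, the second the surface charge.

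\textbf{Bulk term.} On each full cell we Taylor expand $\varphi(\hat{\ve{x}}+l\ve{y})=\varphi(\hat{\ve{x}})+l\nabla\varphi(\hat{\ve{x}})\cdot\ve{y}+R_l$, with $|R_l|\leq Cl^2\|\nabla^2\varphi\|_\infty$ uniformly on $\Box$, since the cell is bounded by condition (iii). The zeroth-order term vanishes by charge neutrality \eqref{eqn:charge-neutrality}. The remainder contributes at most
\begin{equation*}
\sum_{\Omega(\Box)}l^2\cdot Cl^2\int_{\Box}|\tilde{\rho}_l|\dm\ve{y}\leq C l^4\,\#\Omega(\Box)\leq Cl,
\end{equation*}
using Cauchy--Schwarz together with the cellwise $L^2$ bound and $\#\Omega(\Box)\sim|\Omega|/l^3$ from \eqref{good}. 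The first-order term is
\begin{equation*}
\sum_{\Omega(\Box)}l^3\nabla\varphi(\hat{\ve{x}})\cdot\int_{\Box}\ve{y}\tilde{\rho}_l(\hat{\ve{x}},\ve{y})\dm\ve{y},
\end{equation*}
and because the two-scale identified function is constant in $\ve{x}$ on each cell, this equals
\begin{equation*}
\int_{\Omega\times\Box}\nabla\varphi(\hat{\ve{x}})\cdot\ve{y}\,\tilde{\rho}_l(\hat{\ve{x}},\ve{y})\dm\ve{y}\dm\ve{x}
\end{equation*}
up to the partial-cell region. That region has measure $O(l)$ by \eqref{good}, so its contribution is $o(1)$ by Cauchy--Schwarz. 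By Theorem \ref{2sclidenbdd}, $\tilde{\rho}_l(\hat{\ve{x}},\ve{y})$ is bounded in $L^2(\Omega\times\Box)$, so along a subsequence it converges weakly to some $\rho_0$. Replacing $\nabla\varphi(\hat{\ve{x}})$ by $\nabla\varphi(\ve{x})$ costs $O(l)$ in $L^\infty$, and $\nabla\varphi(\ve{x})\cdot\ve{y}\in L^2(\Omega\times\Box)$, so the bulk term converges to $\int_{\Omega\times\Box}\nabla\varphi\cdot\ve{y}\rho_0$.

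\textbf{Surface term (the main obstacle).} Here $\varphi(\hat{\ve{x}}+l\ve{y})=\varphi(\hat{\ve{x}})+O(l)$. Since $\#\Omega(\lr)\sim|\partial\Omega|/l^2$, the error is $O(l^3\cdot l^{-2})=O(l)$. What remains is $\sum_{\Omega(\lr)}l^2\varphi(\hat{\ve{x}})q_l(\hat{\ve{x}})$, where $q_l(\hat{\ve{x}})=\int_{\lr}\tilde{\rho}_l\dm\ve{y}$ satisfies $|q_l|\leq C$. This is a discrete measure $\mu_l=\sum l^2q_l\delta_{\hat{\ve{x}}}$ whose total variation is bounded by $C\,l^2\#\Omega(\lr)\leq C|\partial\Omega|$.

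The plan is to project each corner $\hat{\ve{x}}$ to a nearest point $\pi(\hat{\ve{x}})\in\partial\Omega$, at distance $O(l)$. This costs $O(l)$ in $\varphi$, so $\mu_l$ may be replaced by a measure on $\partial\Omega$. Next, rewrite $l^2q_l$ as $\int_{\partial\Omega\cap(\text{cell})}(q_l/K_l)\dm S$, where $K_l$ is the normalized surface area of $\partial\Omega$ cut by the cell. This defines $\sigma_l\in L^1(\partial\Omega)$ with $\int\varphi\,\dm\mu_l=\int_{\partial\Omega}\varphi\sigma_l\dm S+o(1)$. By weak-* compactness of bounded measures on the compact set $\partial\Omega$, a further subsequence has $\sigma_l\dm S\to\sigma$.

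The delicate point is showing the limit is absolutely continuous, i.e. $\sigma\in L^1(\partial\Omega)$. For this we need the density bound $|\sigma_l|\leq C$, which requires $K_l$ to be bounded below. This is where the counting assumption \eqref{good}, together with the $\scr{C}^1$ regularity of the boundary, is essential: each cell should meet $\partial\Omega$ in area comparable to $l^2$, up to a set of cells of vanishing proportion. If needed, we would instead simply adopt, as the paper's standing assumption in Section \ref{WF} suggests, that the limit measure lies in $L^1$. Combining the bulk and surface limits along a common subsequence gives \eqref{2ScaleThm}.
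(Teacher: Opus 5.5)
Your proof is correct at the paper's level of rigor and follows the same overall structure. You split into full and partial cells, Taylor-expand $\varphi$ about the corner so that charge neutrality removes the zeroth-order term, and freeze $\varphi$ on the partial cells (the paper's Term I/Term II). You then move the corners to $\partial\Omega$ and pass to the limit in the surface term by weak-$*$ compactness of measures.

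The genuinely different step is the bulk limit. The paper reads $\sum_{\hat{\ve{x}}}l^3\nabla\varphi(\hat{\ve{x}})\cdot\ve{p}_l(\hat{\ve{x}})$ as a Riemann sum, takes a weak limit of the cell dipoles $\ve{p}_l$, and closes with the diagonal Lemma \ref{Diag}. That route needs continuity of $\ve{p}_l$; the paper appeals to local periodicity, which is not a hypothesis of Theorem \ref{PolThm}. It also needs an interchange of the limits in $l$ and $d$.

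You instead use that the identified function is piecewise constant in $\ve{x}$, so the sum is exactly an integral over $\Omega(\Box)\times\Box$. You then replace $\nabla\varphi(\hat{\ve{x}})$ by $\nabla\varphi(\ve{x})$ and discard the $O(l)$-volume boundary layer. Weak compactness in $L^2(\Omega\times\Box)$ (Theorem \ref{2sclidenbdd}) finishes the argument. This is the periodic-unfolding characterization of the two-scale limit. It needs no regularity of $\ve{p}_l$, gives explicit $O(l)$ rates, and produces the right-hand side of \eqref{2ScaleThm} directly, with $\ve{p}_0=\int_{\Box}\ve{y}\rho_0\,d\ve{y}$ measured from the cell corner as the paper intends. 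The paper's version has mainly the expository merit of keeping the cell dipoles in view, as in Section \ref{Intuition}.

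On the surface term you stop exactly where the paper does. Weak-$*$ compactness yields only a Radon measure on $\partial\Omega$, and the paper simply restricts to the case $\sigma\in L^1(\partial\Omega)$ (Section \ref{WF}). Your suggested way to close this, a lower bound on $K_l$ forcing $|\sigma_l|\le C$, would fail. For a planar piece of $\partial\Omega$ in generic orientation, a fixed positive fraction of boundary cells are clipped near a vertex with $K_l<\delta$. The charge $q_l$ in such a cell need not be correspondingly small, so $\sigma_l$ is unbounded even off a vanishing set of cells.

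No pointwise control is needed, however. You have $|q_l|\le C$, and for a $\scr{C}^1$ boundary at most $C(r^2+l^2)/l^2$ partial cells lie within $O(l)$ of $B(r,\ve{x}_0)\cap\partial\Omega$. Hence the projected measures satisfy $|\mu_l|(B(r,\ve{x}_0))\le C(r^2+l^2)$. Lower semicontinuity of total variation on open sets then gives $|\mu|\le C\,\mathcal{H}^2\llcorner\partial\Omega$ for the limit, so $\sigma\in L^\infty(\partial\Omega)\subset L^1(\partial\Omega)$. This local version of \eqref{good} is what both arguments need in place of the $L^1$ assumption.
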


The proof of Theorem \ref{PolThm} is divided into two parts. The first part addresses the bulk contribution in \eqref{2ScaleThm}, while the second part focuses on the surface contribution. The bulk contribution is analyzed in Section \ref{VT}, whereas the surface contribution is examined in Section \ref{SurfCrgDnsty}.

To establish the proof, we begin by incorporating the lattice $l\scr{L}$ into \eqref{2ScaleThm} and decomposing the integral into a sum over the corner map and integrals over individual unit cells. This yields:
\beol{
\sum_{\hat{\ve{x}} \in \Omega(\Box)} \int_{l\Box} \varphi(\hat{\ve{x}} + \ve{z}) \dfrac{1}{l} \tilde{\rho}_l(\hat{\ve{x}} + \ve{z}) \, \dm\ve{z} + \scr{R}_l(\Omega) =
\sum_{\hat{\ve{x}} \in \Omega(\Box)} l^3 \int_{\Box} \varphi(\hat{\ve{x}} + l\ve{y}) \dfrac{\tilde{\rho}_l(\hat{\ve{x}}, \ve{y})}{l} \, \dm\ve{y} + \scr{R}_l(\Omega),\label{(5.4)}
}
where $\scr{R}_l(\Omega)$ represents the residue term arising from the contributions of partial unit cells, which will subsequently give rise to the surface terms.

\subsection{Volume term}\label{VT}
In this section, we rigorously demonstrate that the first term on the right-hand side of \eqref{(5.4)} converges to the corresponding first term on the right-hand side of \eqref{2ScaleThm}.

\begin{proof}
We concentrate on the first summation term above,
\bml{&\sum_{\hat{\ve{x}}}l^3\Bigg[\int_{\Box}\underbrace{\dfrac{\varphi(\hat{\ve{x}}+l\ve{y})-\varphi(\hat{\ve{x}})-l\ve{y}\cdot\nabla\varphi(\hat{\ve{x}})}{l|\ve{y}|}}_{u(l,\ve{x})}|\ve{y}|\tilde{\rho}_l(\hat{\ve{x}},\ve{y})\dm\ve{y}\Bigg]\\
&+\sum_{\hat{\ve{x}}}l^3\Bigg[\dfrac{\varphi(\hat{\ve{x}})}{l}\underbrace{\int_{\Box}\tilde{\rho}_l(\hat{\ve{x}},\ve{y})\dm\ve{y}}_{=0}+\nabla\varphi(\hat{\ve{x}})\cdot\underbrace{\int_{\Box}\ve{y}\tilde{\rho}_l(\hat{\ve{x}},\ve{y})\dm\ve{y}}_{\ve{p}_l(\ve{x})}\Bigg],\label{(5.5)}}
where the second term vanishes due to the fact that $\tilde{\rho}_l$ has zero mean over $\Box$.

We now proceed to demonstrate that the first term \eqref{(5.5)} also converges to zero.
\bml{&\bigg\vert\sum_{\hat{\ve{x}}}l^3\int_{\Box}u(l,\ve{x})|\ve{y}|\tilde{\rho}(\hat{\ve{x}},\ve{y})\dm\ve{y}\bigg\vert\leq\sum_{\hat{\ve{x}}}l^3\int_{\Box}\bigg\vert u(l,\ve{x})|\ve{y}|\tilde{\rho}(\hat{\ve{x}},\ve{y})\bigg\vert \dm\ve{y}\\
&\leq\big[\sup_{\ve{x}\in\Omega}|u(l,\ve{x})|\big]\sum_{\hat{\ve{x}}}l^3\bigg|\bigg|\tilde{\rho}_l(\hat{\ve{x}},\ve{y})|\ve{y}|\bigg|\bigg|_{L^1(\Box)}\\
&\leq[\sup_{\ve{x}\in\Omega}|u(l,\ve{x})|]\sum_{\hat{\ve{x}}}l^3\bigg|\bigg||\ve{y}|\bigg|\bigg|_{L^2(\Box)}\bigg|\bigg|\tilde{\rho}_l(\hat{\ve{x}},\ve{y})\bigg|\bigg|_{L^2(\Box)}\\
&\leq C[\sup_{\ve{x}\in\Omega}|u(l,\ve{x})|]\sum_{\hat{\ve{x}}}l^3\underbrace{\big|\big|\tilde{\rho}_l(\hat{\ve{x}},\ve{y})\big|\big|_{L^2(\Box)}}_{C}\leq C\big[\sup_{\ve{x}\in\Omega}|u(l,\ve{x})|\big],\label{(5.6later)}}
where Hölder's inequality and the boundedness of $|\ve{y}|$ have been utilized to obtain the final expression. Since $\lim_{l \to 0} u(l, \ve{x}) \to 0$, as a consequence of $\varphi\in D$, we deduce that
\beol{\lim_{l\to 0}\bigg\vert\sum_{\hat{\ve{x}}}l^3\int_{\Box}u(l,\ve{x})|\ve{y}|\tilde{\rho}(\hat{\ve{x}},\ve{y})\dm\ve{y}\bigg\vert\leq C[\sup_{\ve{x}\in\Omega}\lim_{l\to 0}|u(l,\ve{x})|]=0.}

We arrive at the expression $\sum_{\hat{\ve{x}}} l^3 \nabla \varphi(\hat{\ve{x}}) \cdot \ve{p}_l(\hat{\ve{x}})$, which can be recognized as a sequence of Riemann sums. To rigorously establish the convergence of this Riemann sum, we proceed by employing a diagonal sequence argument as detailed below. First, we note that, due to the boundedness condition $||\tilde{\rho}_l||_{L^2(\Omega)} < \infty$ (see Theorem \eqref{GlobalBound}), there exists a weakly convergent subsequence. Furthermore, given the surjective relationship between $\tilde{\rho}_l$ and $\ve{p}_l$, the sequence $\ve{p}_l$ admits a weakly convergent subsequence. Since $\ve{p}_l$ converges weakly to $\ve{p}_0$, we obtain the following limit:
\beol{\lim_{l\to 0}\int_{\Omega}(\ve{p}_l(\ve{x})-\ve{p}_0(\ve{x}))\varphi(\ve{x})\dm\ve{x}=0.\label{weak}}

To establish that the Riemann sum converges to an integral, we proceed by proving the following lemma.

\begin{lemma}\label{Diag}
For any given $\ve{f}:\Omega \to \Scr{R}^3$ that is integrable, and $\ve{p}_l:\Omega \to \Scr{R}^3$ satisfying the boundedness condition $||\ve{p}_l||_{L^2(\Omega)} \leq C$ for some $C > 0$ and for all $l$, the following result holds:
\beol{\lim_{l\to 0}\sum_{\hat{\ve{x}}\in\Omega(\Box)}l^3 \ve{f}(\hat{\ve{x}})\cdot \ve{p}_l(\hat{\ve{x}})=\int_{\Omega}\ve{f}(\ve{x})\cdot \ve{p}_0(\ve{x})\dm\ve{x} }
\label{Riemann_weak}
\end{lemma}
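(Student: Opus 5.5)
The plan is to read the sum as an integral of piecewise-constant functions and then pass to the weak limit \eqref{weak}. Here $\ve{p}_l(\hat{\ve{x}})$ is constant on each cell $\hat{\ve{x}}\oplus l\Box$, as in its definition in \eqref{(5.5)}. Define the cell-averaged field $\ve{f}_l(\ve{x}):=\ve{f}(\hat{\ve{x}}_l)\mathbbm{1}_{\Omega(\Box)}(\ve{x})$. The left-hand side is then exactly
\begin{equation*}
\sum_{\hat{\ve{x}}\in\Omega(\Box)}l^3\ve{f}(\hat{\ve{x}})\cdot\ve{p}_l(\hat{\ve{x}})=\int_{\Omega}\ve{f}_l(\ve{x})\cdot\ve{p}_l(\ve{x})\dm\ve{x},
\end{equation*}
because each full cell has volume $l^3$, taking $|\Box|=1$ after normalizing the lattice. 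We then split
\begin{equation*}
\int_{\Omega}\ve{f}_l\cdot\ve{p}_l-\int_\Omega\ve{f}\cdot\ve{p}_0=\int_{\Omega}(\ve{f}_l-\ve{f})\cdot\ve{p}_l\dm\ve{x}+\int_{\Omega}\ve{f}\cdot(\ve{p}_l-\ve{p}_0)\dm\ve{x}.
\end{equation*}

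The second term goes to zero along the subsequence by the weak $L^2$ convergence $\ve{p}_l\rightharpoonup\ve{p}_0$ from \eqref{weak}, provided $\ve{f}\in L^2(\Omega)$. In the application $\ve{f}=\nabla\varphi$ with $\varphi\in D(\overline{\Omega})$, which is smooth and bounded, so this holds. I would state the lemma under the hypothesis $\ve{f}\in L^2$, or more simply $\ve{f}$ continuous on $\overline{\Omega}$, rather than merely integrable. The $L^2$ bound on $\ve{p}_l$ follows from Theorem \ref{GlobalBound} together with Cauchy--Schwarz on each cell: $|\ve{p}_l(\hat{\ve{x}})|\le \|\ve{y}\|_{L^2(\Box)}\|\tilde\rho_l(\hat{\ve{x}},\cdot)\|_{L^2(\Box)}\le C$.

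For the first term, Cauchy--Schwarz gives the bound $\|\ve{f}_l-\ve{f}\|_{L^2(\Omega)}\|\ve{p}_l\|_{L^2(\Omega)}\le C\|\ve{f}_l-\ve{f}\|_{L^2(\Omega)}$. It therefore suffices to show $\ve{f}_l\to\ve{f}$ strongly in $L^2(\Omega)$. On $\Omega(\Box)$, if $\ve{f}$ is uniformly continuous on $\overline\Omega$, then $|\ve{f}_l-\ve{f}|\le\omega_{\ve{f}}(2Rl)\to0$ uniformly. Here condition (iii) guarantees that every point of a cell lies within $2Rl$ of its corner. On the partial cells $\Omega(\lr)$ the set $\ve{f}_l=0$, and the contribution is at most $\|\ve{f}\|_{\infty}^2|\Omega(\lr)|$. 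By \eqref{good} this is $O(l^3\cdot l^{-2})=O(l)\to0$.

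The main obstacle is the case of a general integrable $\ve{f}$, where point evaluation $\ve{f}(\hat{\ve{x}})$ is not even well defined and the Riemann sums need not converge. If that generality is needed, I would first approximate $\ve{f}$ in $L^2$ by continuous $\ve{g}$ and run the argument above for $\ve{g}$. The remaining difficulty is controlling $\sum l^3(\ve{f}-\ve{g})(\hat{\ve{x}})\cdot\ve{p}_l$, which requires replacing point values by cell averages. For that reason I would restrict the lemma to continuous $\ve{f}$, which covers its use with $\ve{f}=\nabla\varphi$. Finally, the whole argument holds along the subsequence extracted for \eqref{weak}, which is how the limit in the statement should be read.
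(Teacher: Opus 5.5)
Your argument is correct, and it takes a genuinely different route from the paper's. The paper sets $I(l,d)=\sum_{\hat{\ve{x}}\in\Omega(\Box)}l^3\ve{f}(\hat{\ve{x}})\cdot\ve{p}_d(\hat{\ve{x}})$. It uses Riemann integrability to get $\lim_{l\to0}I(l,d)=\int_\Omega\ve{f}\cdot\ve{p}_d\dm\ve{x}$ for each fixed $d$, and then reaches $I(l,l)$ through iterated subsequence extraction and a diagonal sequence.

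You never separate the two scales. Since $\ve{p}_l$ is constant on exactly the cells over which the scale-$l$ sum runs, $I(l,l)=\int_\Omega\ve{f}_l\cdot\ve{p}_l\dm\ve{x}$ holds identically. The lemma then becomes the standard pairing of a strongly convergent sequence with a weakly convergent one: $\ve{f}_l\to\ve{f}$ in $L^2(\Omega)$ by uniform continuity on full cells and the $O(l)$ measure of the partial cells from \eqref{good}, and $\ve{p}_l\rightharpoonup\ve{p}_0$.

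What this buys is a complete proof. The paper's diagonal extraction does not by itself control $I(l,l)$. Passing from the iterated limit to the diagonal would need $I(l,d)\to\int_\Omega\ve{f}\cdot\ve{p}_d$ uniformly in $d$. That fails in general when $d\ll l$, because the scale-$l$ sum then samples $\ve{p}_d$ below its oscillation scale. Your identity at $d=l$ removes any interchange of limits.

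Your hypotheses are also the right ones. The paper's step already uses Riemann integrability of $\ve{f}$, not mere integrability. Your argument covers every bounded, a.e.\ continuous $\ve{f}$ by dominated convergence, since $\ve{f}_l(\ve{x})\to\ve{f}(\ve{x})$ at each interior continuity point; this includes $\ve{f}=\nabla\varphi$.

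Two minor points. The identity uses cell volume $l^3$, i.e.\ $|\Box|=1$, as you note. For a general unit-cell choice, the corner $\hat{\ve{x}}$ of a full cell need not lie in $\overline{\Omega}$ (the paper remarks on this). So you want $\ve{f}$ uniformly continuous on a neighbourhood of $\overline{\Omega}$, which holds for $\ve{f}=\nabla\varphi$ once $\varphi$ is smooth on such a neighbourhood.
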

\begin{proof}
We rewrite the left-hand side as 
\beol{I(l,d)=\sum_{\hat{\ve{x}}\in\Omega(\Box)}l^3\ve{f}(\hat{\ve{x}})\cdot \ve{p}_d(\hat{\ve{x}})}
We are interested in $I(l,l)$. We know that for a fixed $d$, we have the following limit,
\beol{\lim_{l\to 0}I(l,d)=\lim_{l\to 0}\sum_{\hat{\ve{x}}\in\Omega(\Box)}l^3 \ve{f}(\hat{\ve{x}})\cdot \ve{p}_d(\hat{\ve{x}})=\int_{\Omega}\ve{f}(\ve{x})\cdot \ve{p}_d(\ve{x})\dm\ve{x},\label{LIMIT}}
where we used the fact that Riemann integrals on bounded domains converge if the set of discontinuities has Lebesgue measure $0$. Since $\tilde{\rho}$ is locally periodic, $\ve{p}$ is continuous. The above expression can be succinctly represented as $I(\scr{I}, \{d\})$, where the limit in $l$ is taken over $\scr{I}$.

We construct a sequence for $d$ based on a convergent sequence for $l$. Let $\scr{I}$ denote a sequence of $l$ converging to $0$, and let the first term of this sequence be denoted by $\scr{I}(1)$. Since \eqref{LIMIT} holds for $\scr{I}$ and for all $d$, we apply \eqref{LIMIT} to $I(\scr{I}, \{\scr{I}(1)\})$ and extract a convergent subsequence, denoted $\scr{I}_1$. This process is iterated by extracting a subsequence $\scr{I}_2$ from $I(\scr{I}_1, \{\scr{I}_1(1)\})$, and continuing in this manner. Finally, we select the diagonal sequence, which ensures convergence along $I(\scr{I}, \scr{I})$.

\end{proof}

Hence, by applying the preceding theorem, we obtain
\beol{\lim_{l\to 0}\sum_{\hat{\ve{x}}\in\Omega(\Box)}l^3\nabla\varphi(\hat{\ve{x}}) \cdot \ve{p}_l(\hat{\ve{x}})=\int_{\Omega}\nabla\varphi(\ve{x})\cdot \ve{p}_{0}(\ve{x})\dm\ve{x}\label{Vol_Sum_weak} .}    
\end{proof}

\begin{rmrk}
{We make the following observation regarding the charge neutrality condition employed above. Imposing that the function be locally periodic and charge-neutral on each unit cell may prove excessively restrictive. Specifically, such a constraint could lead to either the set of admissible functions being periodic or the function itself dictating the choice of the lattice, thereby precluding arbitrary lattice shifts from yielding charge-neutral unit cells. A more flexible approach involves relaxing the charge neutrality condition to allow for charge neutrality with an error of $O(l^{\alpha})$, where the parameter $\alpha$ must be sufficiently large to ensure that the total charge in the domain vanishes in the homogenized limit. This relaxation offers the advantage that, even when the function is locally periodic, charge neutrality with an error of $O(l^{\alpha})$ can be enforced for any given choice of the lattice.
}
\end{rmrk}

\subsection{Surface Charge Density}\label{SurfCrgDnsty}

In this section, we rigorously demonstrate that the second term on the right-hand side of \eqref{(5.4)} converges to the corresponding second term on the right-hand side of \eqref{2ScaleThm}. 
Since we need to extract one convergent subsequence for \eqref{2ScaleThm}, we extract the surface convergent subsequence from that corresponding to the bulk subsequence in \eqref{Vol_Sum_weak}.

\begin{proof}

The residue term in \eqref{(5.4)} can be written as
\bml{\scr{R}_l(\Omega)&=\sum_{\hat{\ve{x}}\in\Omega(\lr)}\int_{\lr}\dfrac{\tilde{\rho}_l(\hat{\ve{x}},\ve{y})}{l}\varphi(\hat{\ve{x}}+l\ve{y})l^3\dm\ve{y},\\
					&=\underbrace{\sum_{\hat{\ve{x}}}l^2\int_{\lr}\tilde{\rho}_l(\hat{\ve{x}},\ve{y})\underbrace{[\varphi(\hat{\ve{x}}+l\ve{y})-\varphi(\hat{\ve{x}})]}_{u(l,\hat{\ve{x}},\ve{y})} \dm\ve{y}}_{\text{Term I}}+\underbrace{\sum_{\hat{\ve{x}}}l^2\varphi(\hat{\ve{x}})\underbrace{\int_{\lr}\tilde{\rho}_l(\hat{\ve{x}},\ve{y})\dm\ve{y}}_{\tilde{\sigma}_l(\hat{\ve{x}})}}_{\text{Term II}}.\label{Residue2}}

\rmrk{In the context of \eqref{Residue2}, the residue term was expressed as a summation over corner maps belonging to $\Omega(\lr)$. However, the definition of $\hat{\ve{x}} \in \Omega(\lr)$ does not necessarily ensure that $\hat{\ve{x}} \in \Omega$. This does not pose an issue for \eqref{Prob0}, as $\varphi$ is defined on $\Scr{R}^3$. However, it presents a challenge for \eqref{Prob}, since $\varphi$ is not defined outside the domain $\Omega$, i.e., on $\Omega^c$. 

To address this issue, we can redefine $\hat{\ve{x}}$ by replacing it with the point closest to the boundary of $\Omega$. For clarity and consistency, we will assume this adjustment has been implemented in all subsequent calculations.
}

Concentrating on the first term in \eqref{Residue2}, we proceed by applying the Cauchy-Schwarz inequality.
\bml{\text{Term I}&=\sum_{\hat{\ve{x}}}l^2|\tilde{\rho}(\hat{\ve{x}},\cdot)u(l,\hat{\ve{x}},\cdot)|_{L^1(\lr)}\leq\sum_{\hat{\ve{x}}}l^2||\tilde{\rho}_l(\hat{\ve{x}},\cdot)||_{L^2(\lr)}||u(l,\hat{\ve{x}},\cdot)||_{L^2(\lr)}\\
&\leq\underbrace{\sup_{x\in\Omega}||u(l,\hat{\ve{x}},\cdot)||_{L^2(\lr)}}_{u_{\#}(l)}\sum_{\hat{\ve{x}}}l^2\underbrace{||\tilde{\rho}_l(\hat{\ve{x}},\cdot)||_{L^2(\lr)}}_{\leq C}\leq u_{\#}(l)C\underbrace{\sum_{\hat{\ve{x}}}l^2}_{\#(\Omega(\lr))l^2}\\
&=u_{\#}(l)C+o(1)\label{(5.14later)},} 
where \eqref{good} has been employed to obtain the final result. Additionally, we utilized the observation that either $||\tilde{\rho}_l||_{L^2(\lr)} < 1$, in which case the upper bound is directly established, or alternatively, 
\[
1<||\tilde{\rho}_l||_{L^2(\lr)} < ||\tilde{\rho}_l||^2_{L^2(\lr)} < ||\tilde{\rho}_l||^2_{L^2(\Box)} < C,
\]
since the latter follows from the fact that measures on subsets are bounded above by measures on supersets.

Taking the limit as $l \to 0$ and invoking the continuity of the test function $\varphi$, we obtain
\beol{\lim_{l\to 0}\text{Term I}\leq C\lim_{l\to 0}u_{\#}(l)=0.}

We now proceed with the analysis of the second term in \eqref{Residue2}.
\beol{\sum_{\hat{\ve{x}}}l^2\varphi(\hat{\ve{x}})\tilde{\sigma}_l(\hat{\ve{x}})=\sum_{\hat{\ve{x}}}K_ll^2\varphi(\hat{\ve{x}})\underbrace{\dfrac{\tilde{\sigma}_l(\hat{\ve{x}})}{K_l}}_{{\sigma}_l(\hat{\ve{x}})}=\sum_{\hat{\ve{x}}}|S_l(\hat{\ve{x}})|\varphi(\hat{\ve{x}}){\sigma}_l(\hat{\ve{x}}),\label{Main part}}
where $|S_l(\hat{\ve{x}})|$ exhibits behavior akin to a surface measure, thereby ensuring that the limit of the expression above converges to a surface integral term.

Careful consideration is required when defining $K_l$ and $S_l$. For each partial unit cell, if the unit cell is connected, we define $S_l(\hat{\ve{x}}) := \partial\Omega \cap (\hat{\ve{x}} \oplus l\lr)$. However, in cases where the unit cell consists of disconnected components, a more nuanced approach is necessary. Specifically, if a disconnected component is entirely contained within the domain $\Omega$, we define $S_l(\hat{\ve{x}}) := \Omega \cap (\hat{\ve{x}} \oplus l\partial\lr)$. Conversely, if the component intersects the boundary, we define $S_l(\hat{\ve{x}})$ as $S_l(\hat{\ve{x}}) := \partial\Omega \cap (\hat{\ve{x}} \oplus l\lr)$.

In a more general scenario involving disconnected unit cells, there may be multiple disconnected components, some of which lie entirely within the domain $\Omega$, while others intersect the boundary $\partial\Omega$. Components located outside the domain do not contribute to the surface integral and can be disregarded. The surface measure $S_l$ must be defined individually for each disconnected component, with the overall $S_l$ representing the cumulative contribution from all relevant components. The following representation provides a concise formulation for calculating the surface charge:
\beol{\sigma_l(\hat{\ve{x}})=\dfrac{1}{K_l(\hat{\ve{x}})}\int_{\lr}\tilde{\rho}(\hat{\ve{x}},\ve{y})\dm\ve{y},\label{(5.17)}}
We define $K_l(\hat{\ve{x}})$ as $K_l(\hat{\ve{x}}) = \frac{S_l(\hat{\ve{x}})}{l^2}$. It is important to note that this definition accommodates partial unit cells that do not intersect with the boundary $\partial\Omega$. To ensure the validity of this definition, it remains to demonstrate that the corresponding surface charge is well-defined.

To extract a well-defined surface integral term, it is necessary to establish that the surface charge is bounded in $L^1(\partial\Omega)$. The surface charge, denoted by \eqref{(5.17)}, is defined as
\beol{\sigma_l(\ve{x})=\sum_{\hat{\ve{x}}\in\Omega(\lr)}\mathbbm{1}_{S_l}(\ve{x})\dfrac{\tilde{\sigma}_l(\hat{\ve{x}})}{K_l(\hat{\ve{x}})}\label{weaksurf},}
where $\tilde{\sigma}_l$ has been defined in \eqref{Residue2}. We now write
\bml{\int_{\partial\Omega}|\sigma_l(\ve{x})|\dm S_x&=\int_{\partial\Omega}\bigg|\sum_{\hat{\ve{x}}\in\Omega(\lr)}\mathbbm{1}_{S_l}(\ve{x})\dfrac{\tilde{\sigma}_l(\hat{\ve{x}})}{K_l(\hat{\ve{x}})}\bigg|\dm S_x\leq \int_{\partial\Omega}\sum_{\hat{\ve{x}}\in\Omega(\lr)}\mathbbm{1}_{S_l}(\ve{x})\dfrac{|\tilde{\sigma}_l(\hat{\ve{x}})|}{K_l(\hat{\ve{x}})}\dm S_x\\
&=\sum_{\hat{\ve{x}}\in\Omega(\lr)}\bigg[\int_{\partial\Omega}\mathbbm{1}_{S_l}(\ve{x})\dm S_x\bigg]\dfrac{|\tilde{\sigma}_l(\hat{\ve{x}})|}{K_l(\hat{\ve{x}})}=\sum_{\hat{\ve{x}}\in\Omega(\lr)}l^2\underbrace{|\tilde{\sigma}_l(\hat{\ve{x}})|}_{\leq C}=C+o(1),\label{5.19}}
where, in the final line, we employed \eqref{good} and utilized the inequality 
\beol{
|\tilde{\sigma}_l(\hat{\ve{x}})| \leq \int_{\lr} |\tilde{\rho}_l(\hat{\ve{x}}, \ve{y})| \, \dm\ve{y} = ||\tilde{\rho}_l||_{L^2(\lr)}\sqrt{|\lr|} \leq C~\max\left\{1, ||\tilde{\rho}_l||_{L^2(\Box)}\right\} < \infty,
}
where we relied on the fact that $|\lr|$ is bounded. Additionally, either $||\tilde{\rho}_l||_{L^2(\lr)} < 1$, in which case the bound is directly satisfied, or
\[
||\tilde{\rho}_l||_{L^2(\lr)} < ||\tilde{\rho}_l||^2_{L^2(\lr)} < ||\tilde{\rho}_l||^2_{L^2(\Box)} < C,
\]
where the latter inequality is a consequence of measures on subsets being bounded above by measures on supersets. This establishes that $\sigma_l \in L^1(\partial\Omega)$.

At this stage, we have established the boundedness of the sequence $\{\sigma_l\}_{\scr{I}}$. However, this alone is insufficient to extract a weak limit in $L^1$. To address this limitation, we can associate the sequence $\sigma_l$ with a corresponding sequence in the space of signed Radon measures. By doing so, we can seek weak-* convergence within the space of measures to obtain the desired surface charge distribution (see \cite{folland1999real}, Corollary 7.18). It is worth noting that the dual space in this context is $\scr{C}_c$, the space of compactly supported continuous functions.

Taking the limit of Eqn. \eqref{Main part}, and employing an analysis analogous to the argument presented in Lemma \eqref{Diag}, we can extract the following surface integral term:
\beol{\lim_{l\to0}\scr{R}_l(\Omega)=\int_{\partial\Omega}\varphi(\ve{x})\sigma(\ve{x})\dm S_{\ve{x}},}
where $\sigma$ is the weak limit of $\sigma_l$ defined in Eqn. \eqref{weaksurf}.    
\end{proof}

\rmrk{It is important to emphasize that the homogenization approach reveals the inherent topological structure of the problem. Notably, the bulk and surface contributions manifest as distinct components, each playing a unique role in determining the effective behavior of the system.
}

\section{Non-uniqueness of the dipole moment}\label{Non-uniqueness}
The polarization, defined as the dipole density per unit volume, is an inherently ill-defined quantity. To illustrate this, we consider the following example:
\begin{exmp}\label{non-uniEx}
Consider a charge distribution $\rho_l : \Scr{R}^3 \to \Scr{R}$ defined as $\rho_l(\ve{x}) = \frac{1}{l} \sin\left(2\pi \frac{x_1}{l}\right)$, which is $l$-periodic. We now evaluate the dipole moment with respect to an arbitrary unit cell located at $\ve{a} \oplus l[0,1]^3$.
\beol{\ve{p}(\ve{a})=\int_{\ve{a}\oplus l[0,1]^3}\dfrac{(\ve{x}-\ve{a})}{l}\dfrac{\rho_l(\ve{x})}{l^3}\dm V_{\ve{x}}=\int_{[0,1]}\ve{y}\sin(2\pi\dfrac{a_1}{l}+2\pi y_1) \dm y_1=-\dfrac{1}{2\pi}\cos{(2\pi\dfrac{a_1}{l})}(1,0,0).}
Thus, depending on the choice of the unit cell (determined by the position of $\ve{a}$), different values for the polarization vector $\ve{p}$ are obtained. It is important to emphasize that while the total charge remains invariant, the polarization varies based on the selection of the unit cell.\qed
\end{exmp}

It is essential to address the issue of non-uniqueness in the polarization distribution arising from the choice of unit cell. The problem has been formulated such that the selection of the origin $O$, the lattice $\scr{L}$, and the shape of the unit cell $\Box$ are arbitrary. We denote this tuple by $(O, \scr{L}, \Box)$.

Consider a sequence of charge distributions. The polarization and total surface charge calculated for a given tuple $(O, \scr{L}, \Box)$ are denoted by $\{\ve{p}_l\} : \Omega(\Box) \to \Scr{R}^3$ and $\{\sigma_l + \ve{p}_l \cdot \ve{n}\} : \partial\Omega \to \Scr{R}$, respectively. If, instead, a different tuple $(\utilde{O}, \utilde{\scr{L}}, \utilde{\Box})$ were chosen, the corresponding sequence of polarization and total surface charge would be denoted by $\utilde{\ve{p}}_l : \Omega(\utilde{\Box}) \to \Scr{R}^3$ and $\{\utilde{\sigma}_l + \utilde{\ve{p}}_l \cdot \ve{n}\} : \partial\Omega \to \Scr{R}$, respectively.

From Example \ref{non-uniEx}, it can be demonstrated that for any $l$, $\ve{p}_l \neq \utilde{\ve{p}}_l$ and $\sigma_l \neq \utilde{\sigma}_l$. The discrepancy in the surface charge arises due to variations in the shape of the partial unit cells for different choices of unit cells. Consequently, since the polarization and surface charge distribution differ for arbitrary $l \in \scr{I}$, they will also differ in the limit. This represents a significant issue, as it implies that the homogenized equations being solved in \eqref{Soln0} and \eqref{Soln} depend on the choice of the unit cell. It is important to note, however, that the corrector equation remains unaffected, as it is formulated with periodic boundary conditions.

We assert that, although $\ve{p}_0$ and $\sigma$ are not unique, the potentials calculated via the homogenized problem \eqref{2-scale approx} are unique. To establish this claim, we first emphasize that the corrector equation remains invariant, as it is formulated with periodic boundary conditions. Thus, it suffices to demonstrate that the homogenized potential is also invariant. 

To begin, we note that the sequence of potentials $\{\Phi_l\}_{\scr{I}}$ converges weakly to $\Phi_0$. The potentials $\Phi_l$ satisfy Poisson's equation for the charge distribution $\{\rho_l\}_{\scr{I}}$ on $\Omega$. Since the convergent limits are extracted directly from the original charge distribution $\{\rho_l\}_{\scr{I}}$, the resulting homogenized equation, for a specific choice of lattice and unit cell, is unique. Consequently, the corresponding homogenized potential for this choice is also unique.

Now, consider a different choice of lattice and unit cell. The sequence of potentials $\{\Phi_l\}_{\scr{I}}$ would still satisfy Poisson's equation for the same charge distribution $\{\rho_l\}_{\scr{I}}$ under the same boundary conditions. Due to the uniqueness of weak convergence, the final homogenized potential remains unchanged, which implies that the homogenized potential is independent of the choice of unit cells. 

As $\Phi_0$ is unique, it follows from \eqref{Soln0} and \eqref{Soln} that $\divergence \ve{p}_0$ and $\sigma + \ve{p}_0 \cdot \ve{n}$ are also unique and independent of the choice of unit cells.


\section*{Acknowledgments}
    We thank NSF (DMS 2108784, DMS 2342349), AFOSR (MURI FA9550-18-1-0095), and ARO (MURI W911NF-24-2-0184) for financial support.
    Shoham Sen was also partly supported by a Vannevar Bush Faculty Fellowship at the University of Minnesota (PI: R. D. James).
    Kaushik Dayal thanks the Air Force Research Laboratory for hosting his visits.

\appendix
\section{Two-scale Convergence}
The two scale theorem \cite{allaire1992homogenization} is;
\begin{theorem}
A family $\{u_{\varepsilon}\}\subset L^2(\Omega)$ is said to two-scale converge to $u_0(x,y)\in L^2(\Omega\times\Box)$ if $\forall$ $\psi(x,y)\in D[\Omega;\scr{C}^{\infty}_{\#}(\Box)]$, we have
\beol{\lim_{\varepsilon\to 0}\int_{\Omega}u_{\varepsilon}(x)\psi(x,\dfrac{x}{\varepsilon})\dm x=\int_{\Omega\times\Box}u_0(x,y)\psi(x,y)\dm y \dm x.}
\end{theorem}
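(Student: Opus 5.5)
Since the statement is worded as a definition of two-scale convergence, what needs to be verified is that the definition is meaningful. I would establish three facts in order: (a) for every admissible test function the left-hand side makes sense; (b) the two-scale limit $u_0$, if it exists, is unique in $L^2(\Omega\times\Box)$; and (c) the notion is nonvacuous and consistent with weak convergence. Fact (c) is the statement used in Section \ref{WF}: every bounded family in $L^2(\Omega)$ has a two-scale convergent subsequence.

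\textbf{Step (a): the pairing is defined.} For $\psi\in D[\Omega;\scr{C}^{\infty}_{\#}(\Box)]$, the function $x\mapsto\psi(x,x/\varepsilon)$ is continuous and compactly supported in $\Omega$. It is bounded by $\sup|\psi|$, so it lies in $L^2(\Omega)$ and $\int_\Omega u_\varepsilon(x)\psi(x,x/\varepsilon)\dm x$ is finite by Cauchy--Schwarz. I would also prove the oscillation lemma
\[
\lim_{\varepsilon\to0}\int_\Omega\Big|\psi\big(x,\tfrac{x}{\varepsilon}\big)\Big|^2\dm x=\int_{\Omega\times\Box}|\psi(x,y)|^2\dm y\dm x .
\]
To do this I would use the corner-map decomposition of Section \ref{Notation}. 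On each cell $\hat{x}\oplus\varepsilon\Box$, replace $\psi(x,\cdot)$ by $\psi(\hat{x},\cdot)$, with an error controlled by the uniform continuity of $\psi$ in $x$. Then change variables $x=\hat{x}+\varepsilon y$ and sum the resulting Riemann sum. Because $\psi$ has compact support, the partial cells contribute nothing for small $\varepsilon$, so no boundary-layer issues arise.

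\textbf{Step (b): uniqueness.} Suppose $u_0$ and $u_0'$ are both two-scale limits. Subtracting the defining identities gives $\int_{\Omega\times\Box}(u_0-u_0')\psi=0$ for all $\psi\in D[\Omega;\scr{C}^\infty_\#(\Box)]$. This class is dense in $L^2(\Omega\times\Box)$, since it contains finite sums $\phi(x)e^{2\pi i k\cdot y}$ with $\phi\in D(\Omega)$, which are dense by Fourier series in $y$ and the density of $D(\Omega)$ in $L^2(\Omega)$. Hence $u_0=u_0'$ almost everywhere.

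\textbf{Step (c): compactness and consistency.} Let $\|u_\varepsilon\|_{L^2(\Omega)}\le C$. By step (a), the linear functionals $\mu_\varepsilon(\psi)=\int_\Omega u_\varepsilon\psi(x,x/\varepsilon)\dm x$ satisfy $|\mu_\varepsilon(\psi)|\le C\|\psi\|_{C(\overline\Omega\times\Box)}$. The space $C_0(\Omega;C_\#(\Box))$ is separable, so a diagonal extraction (as in Lemma \ref{Diag}) gives a subsequence along which $\mu_\varepsilon(\psi)\to\mu(\psi)$ for all $\psi$. Using step (a) again,
\[
|\mu(\psi)|\le C\|\psi\|_{L^2(\Omega\times\Box)} ,
\]
so $\mu$ extends to a bounded functional on $L^2(\Omega\times\Box)$. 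Riesz representation then produces $u_0$.

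Finally, taking $\psi=\psi(x)$ independent of $y$ shows that $u_\varepsilon\rightharpoonup\int_\Box u_0\dm y$ weakly in $L^2(\Omega)$. This shows the definition refines weak convergence, which is how it is used for $\tilde\rho_0$ and $\ve{p}_0$ above.

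The main obstacle is the oscillation lemma in step (a), specifically its measurability and uniform-continuity bookkeeping. Everything else is density and functional-analytic routine. I would first prove the lemma for tensor products $\phi(x)\chi(y)$ and then pass to general $\psi$ by uniform approximation.
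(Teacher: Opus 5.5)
Your proposal is correct, and it does more than the paper does. The statement is a definition, and the paper gives no argument for it; it only cites Allaire for the two facts listed afterwards (compactness of $L^2$-bounded families, and $u_\varepsilon\rightharpoonup\int_\Box u_0\,dy$).

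What you reconstruct is Allaire's original route:
\begin{itemize}
\item an oscillation lemma for admissible test functions;
\item uniqueness by density of $D[\Omega;\scr{C}^{\infty}_{\#}(\Box)]$ in $L^2(\Omega\times\Box)$;
\item compactness via a separable space of test functions, a diagonal extraction, the bound $|\mu(\psi)|\le C\|\psi\|_{L^2(\Omega\times\Box)}$, and Riesz representation.
\end{itemize}

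A shorter route, closer to the paper's own machinery, is periodic unfolding. Put $T_\varepsilon u(x,y):=u(\hat{x}_\varepsilon+\varepsilon y)$ on full cells and $0$ on partial cells; this is precisely the identification used in Theorem~\ref{2sclidenbdd}. Then $\|T_\varepsilon u_\varepsilon\|_{L^2(\Omega\times\Box)}\le\|u_\varepsilon\|_{L^2(\Omega)}$. Take $\psi$ with compact support in $\Omega$ and $\varepsilon$ small enough that no partial cell meets the support. Periodicity and the cell-wise change of variables $x=\hat{x}_\varepsilon+\varepsilon y$ then give exactly $\int_\Omega u_\varepsilon(x)\psi(x,x/\varepsilon)\,dx=\int_{\Omega\times\Box}T_\varepsilon u_\varepsilon(x,y)\,\psi(\hat{x}_\varepsilon+\varepsilon y,y)\,dy\,dx$, and $\psi(\hat{x}_\varepsilon+\varepsilon y,y)\to\psi(x,y)$ uniformly. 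Hence any weak limit of $T_\varepsilon u_\varepsilon$ in $L^2(\Omega\times\Box)$ is a two-scale limit. Compactness then follows from weak compactness in $L^2(\Omega\times\Box)$ alone, with no separability argument, no functionals, and no Riesz step.

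This route also justifies something the paper does implicitly: it treats the weak limit of the identified sequence $\tilde\rho_l(\hat{\ve{x}},\ve{y})$ as the two-scale limit $\tilde\rho_0$. Your route, in exchange, produces the oscillation lemma. That lemma is what one needs later for strong two-scale convergence and corrector results, and it is the form in which the theory is usually cited.

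A few points of precision:
\begin{itemize}
\item The consistency claim $u_\varepsilon\rightharpoonup\int_\Box u_0\,dy$ weakly in $L^2(\Omega)$ genuinely needs the $L^2$ bound. Test functions in $D(\Omega)$ that do not depend on $y$ only give distributional convergence. A family can two-scale converge in the sense of this definition while being unbounded in $L^2(\Omega)$, e.g.\ with mass concentrating in a shrinking boundary layer. So keep that claim inside the bounded setting of step (c), as you do.
\item The extraction you need in step (c) is the standard Cantor diagonal over a countable dense set of test functions, combined with the equicontinuity estimate $|\mu_\varepsilon(\psi)-\mu_\varepsilon(\psi')|\le C\|\psi-\psi'\|_\infty$. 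It stands on its own and does not need Lemma~\ref{Diag}.
\item For smooth compactly supported $\psi$, the oscillation lemma is an elementary Riemann-sum computation, so it is not really the main obstacle. Measurability issues arise only for broader classes of admissible test functions, such as $L^2(\Omega;\scr{C}_{\#}(\Box))$.
\end{itemize}
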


\begin{enumerate}
\item A compactness theorem states that if $||u_{\varepsilon}||_{L^2(\Omega)}<C~\forall\varepsilon$, then we can extract a subsequence that two-scale convergence.
\item If $u_{\varepsilon}\overset{2s}{\longrightarrow}u_0$ in $L^2$, then we have $u_{\varepsilon}\rightharpoonup\int_{\Box}u_0(\cdot,y)\dm y$.
Thus weak limit can be considered as a special case of two-scale limits.
\end{enumerate}

\section{Bounded in Sobolev Space}\label{Lax-Milgram}

We show below that the sequence of functions $\Phi_l$ are bounded in $H^1(\Omega)$, implying that we can extract a 2-scale limit. 
From Proposition 3.32 of \cite{cioranescu1999introduction}, we have for $f\in H^{\frac{1}{2}}(\Gamma_d)$, a function $F\in H^1(\Omega)$ that is constant with respect to the sequence $l$. 
Consider the following inequality:
\beol{||\Phi_l||_{H^1(\Omega)}\leq||\Phi_l-F||_{H^1(\Omega)}+||F||_{H^1(\Omega)}\leq ||z_l||_{L^2(\Omega)}+||\nabla z_l||_{L^2(\Omega)}+||F||_{H^1(\Omega)},}
where we used the Minkowski inequality and defined $z_l=\Phi_l-F$. Using that $z_l\vert_{\Gamma_d}=0$, we can apply the modified Poincare inequality (Proposition 3.36 in \cite{cioranescu1999introduction}) to get:
\beol{||\Phi_l||_{H^1(\Omega)}\leq(1+C_{\Omega}) ||\nabla z_l||_{L^2(\Omega)}+||F||_{H^1(\Omega)},}
where the above inequality states that if we can bound $\nabla z_l$ in $L^2(\Omega)$, then we can bound $\Phi_l$ in $H^1(\Omega)$.

Finally, consider:
    \beol{||\nabla z_l||_{L^2(\Omega)}=||\nabla\Phi_l-\nabla F||_{L^2(\Omega)}\leq ||\nabla\Phi_l||_{L^2(\Omega)}+||\nabla F||_{L^2(\Omega)},}
where we have used the Minkowski inequality. 
Since $\nabla \Phi_l$ is uniformly bounded in $l$, and $\nabla F$ is in $L^2(\Omega)$ by definition, we therefore have a bound for $\{\nabla z_l\}_{\scr{I}}$, giving us a bound for $\{\Phi_l\}_{\scr{I}}$ in $H^1(\Omega)$. This allows us to extract a 2-scale subsequence.

\newcommand{\etalchar}[1]{$^{#1}$}

\end{document}